\documentclass[11pt]{article}

\usepackage{graphicx}
\usepackage{dcolumn}
\usepackage{bm}

\usepackage{mathrsfs}  

\usepackage{appendix}

\usepackage[utf8]{inputenc}
\usepackage[T1]{fontenc}
\usepackage{color}
\usepackage{physics}
\usepackage{fullpage}
\usepackage{amsthm}
\usepackage{amssymb}
\usepackage{mathtools}
\usepackage{authblk}
\usepackage{thm-restate}
\usepackage{bbm}
\usepackage[colorlinks,linkcolor=blue,citecolor=blue,urlcolor=magenta,filecolor=cyan]{hyperref}
\usepackage[capitalize]{cleveref}

\newtheorem{theorem}{Theorem}
\newtheorem{lemma}[theorem]{Lemma}

\def\CO{\mathcal{O}}
\def\polylog{\mathrm{polylog}}
\def\poly{\mathrm{poly}}

\def\bbc{\mathbb{C}}

\newcommand{\comment}[1]{}

\title{Simulating Markovian open quantum systems \\ using higher order series expansion}

\author[*]{Xiantao Li}

\author[$\dag$]{Chunhao Wang}
\affil[*]{Department of Mathematics, Pennsylvania State University}
\affil[$\dag$]{Department of Computer Science and Engineering, Pennsylvania State University}
\affil[ ]{Email: xiantao.li@psu.edu; cwang@psu.edu}

\date{}
\begin{document}
\maketitle

\begin{abstract}
  We present an efficient quantum algorithm for simulating the dynamics of Markovian open quantum systems. The performance of our algorithm is similar to the previous state-of-the-art quantum algorithm, i.e., it scales linearly in evolution time and poly-logarithmically in inverse precision. However, our algorithm is conceptually cleaner, and it only uses simple quantum primitives without compressed encoding. Our approach is based on a novel mathematical treatment of the evolution map, which involves a higher-order series expansion based on Duhamel's principle and approximating multiple integrals using scaled Gaussian quadrature. Our method easily generalizes to simulating quantum dynamics with time-dependent Lindbladians. Furthermore, our method of approximating multiple integrals using scaled Gaussian quadrature could potentially be used to produce a more efficient approximation of time-ordered integrals, and therefore can simplify existing quantum algorithms for simulating time-dependent Hamiltonians based on a truncated Dyson series.
\end{abstract}

\section{Introduction}
The last few decades have witnessed the exciting progress in quantum information science to understand and utilize systems that exhibit quantum properties. In the meantime, quantum algorithms for simulating quantum dynamics have received extensive attention. This is because such simulation algorithms are critical tools in many physics applications, and they have the potential to become the first application (if it is not factoring integers!) once large-scale fault-tolerant quantum computers are available. In fact, simulating quantum dynamics was one of the original motivations Feynman proposed quantum computers~\cite{Feynman1982}, who realized the unfavorable scaling for classical algorithms for this task and foresaw the power of quantum computation back in 1982.

Up to now, the majority of the research for simulating quantum dynamics is focused on the ``Hamiltonian regime'', where the system is governed by  Schr\"odinger evolution and it has no interaction with the environment. Such idealized systems are often referred to as \emph{closed systems}. However, if one believes that the universe is a closed system, then it is reasonable to assume that every quantum system, as a subsystem of the whole universe, is an \emph{open system} because every realistic system is coupled to an uncontrollable environment in a non-negligible way. For example, we always model quantum gates as unitary matrices, while their implementations are always subject to noise induced by the environment no matter how hard one pushes the technology forward.

A key challenge in simulating the dynamics of open quantum systems is the lack of a microscopic description of the dynamics influenced by the physical law of the environment. Even if such a description exists, the degrees of freedom will involve numerous information, which would exceed the capacity of quantum computers.
Fortunately, for a special class of open quantum systems, their dynamics can be \emph{fully} described by operators acting on the system. This special class captures the scenario when the system is weakly coupled to the environment and the dynamics of the environment occur at a much faster rate than the system. Intuitively, the environment is fast enough so that the information only flows from the system to the environment and there is no information flowing back. Precisely due to such Markovianity, these open systems are called \emph{Markovian open quantum systems}. Specifically, such dynamics are described in terms of the density matrix $\rho$ by the~differential~equation
\begin{align}
  \label{eq: lindb}
  \frac{\dd}{\dd t}\rho = \mathcal{L}(\rho) \coloneqq -i[H, \rho] + \sum_{j=1}^m\left(L_j\rho L_j^{\dag} - \frac{1}{2}\left\{L_j^{\dag}L_j, \rho\right\}\right)
\end{align}
which is referred to as the \emph{Lindblad equation}~\cite{lindblad1976generators,gorini1976completely}. The superoperator $\mathcal{L}$ is called the \emph{Lindbladian}, and the $L_j$'s are often called the \emph{jump operators}. The solution to the Lindblad equation is given by
\begin{align}
  \rho_t = e^{\mathcal{L}t}(\rho_0).
\end{align}
Here, the superoperator $e^{\mathcal{L}t}$ is a quantum channel for all $t \geq 0$.

It turns out that Markovian open quantum systems are general enough to model many realistic quantum systems in quantum physics~\cite{LCDFGZ1987,Weiss2012}, quantum chemistry~\cite{MK2008,Nitzan2006}, and quantum biology~\cite{DGV2012,HP2013,MREKTA2012}. Computationally, such systems also arise in the context of entanglement preparation~\cite{KBDKMZ2008,KRS2011,RRS_2016}, thermal state preparation~\cite{KB2016}, quantum state engineering~\cite{VWC2009}, and modeling the noise of quantum circuits~\cite{MPGC2013,OG19,SYT21}.

The first quantum algorithm for simulating Markovian open quantum systems was presented by Kliesch et al.~in~\cite{KBG11} in 2011 and the complexity has scaling $O(t^2/\epsilon)$ for evolution time $t$ and precision $\epsilon$. In 2017, Childs and Li~\cite{CL17} constructed an improved algorithm with cost  $O(t^{1.5}/\sqrt{\epsilon})$. Cleve and Wang~\cite{CW17} pushed the study further by reducing the complexity to nearly optimal: $O(t\,\polylog(t/\epsilon))$, which was the first to achieve the complexity that scales linearly in $t$ and poly-logarithmically in $1/\epsilon$ --- that is exponentially better than previous approaches. Recently, researchers have explored these simulation algorithms in various scenarios such as simulating heavy-ion collisions~\cite{dJMM+21}, simulating the non-equilibrium dynamics in the Hubbard model~\cite{TGH22}, simulating the non-equilibrium dynamics in the Schwinger model~\cite{dJLM2+21}, and preparing thermal states~\cite{RWW22}.

The quantum algorithm by Cleve and Wang~\cite{CW17} is based on the first-order approximation of $e^{\mathcal{L}t}$, which can be further approximated by a completely positive map whose Kraus operators involve $H$ and $L_j$'s. Due to the inefficiency of the first-order approximation, the building blocks (the implementation of linear combination unitaries) of~\cite{CW17} need to be repeated many times to simulate a constant-time evolution, which tends to break the poly-logarithmic dependence on $1/\epsilon$.  However, it was shown in~\cite{CW17} that the state of the control qubits of the building blocks concentrates to low-Hamming weight states. Thus a compression scheme had to be employed in~\cite{CW17} to exponentially reduce the uses of the building blocks.

In the literature of Hamiltonian simulation, there is an elegant quantum algorithm that uses a truncated Taylor series~\cite{BCC15}. This algorithm is conceptually much simpler than  its first-order approximation predecessor~\cite{BCC17} while keeping the same efficiency. Thus a natural question arises: Can we generalize the truncated Taylor series approach to simulating Lindblad evolution to obtain a much simpler algorithm? It was not known how to achieve this due to the obstacle that higher powers of the Lindbladian are too intricate to keep track of its completely positiveness, which is the key to implementing a superoperator. To demonstrate this challenge, consider the expression of $\mathcal{L}^2$. For simplicity, let us assume $H = 0$ and $m=1$. We have
\begin{align}
  \begin{aligned}
    \mathcal{L}^2(\rho) &= L^2\rho L^{\dag 2} - \frac{1}{2}LL^{\dag}L\rho L^{\dag} - \frac{1}{2}L\rho L^{\dag}LL^{\dag} - \frac{1}{2}L^{\dag}L^2\rho L^{\dag} + \frac{1}{4}L^{\dag}LL^{\dag}L\rho  \\
                        &-\frac{1}{2}L\rho L^{\dag 2}L + \frac{1}{2}L^{\dag}L\rho L^{\dag}L + \frac{1}{4}\rho L^{\dag}LL^{\dag}L.
  \end{aligned}
\end{align}
As the above shows, it is highly nontrivial to maintain the completely positive structure in the Taylor series 
  $e^{\mathcal{L}} \approx \sum_{\ell=0}^K \frac{\mathcal{L}^{\ell}}{\ell!}$
even for this overly simplified case where $H=0$ and $m=1$.

In this paper, we present a quantum algorithm that takes advantage of a higher-order series expansion based on Duhamel's principle (this principle is briefly discussed in \cref{sec:Duhamel} for readers not familiar with this subject). Our quantum algorithm is conceptually simple and it only contains straightforward applications of simple quantum primitives such as oblivious amplitude amplification for isometries and linear combinations of unitaries (LCU) for channels (presented in the language of block-encodings)~\cite{CL17}. Our approach is inspired by a classical algorithm by Cao and Lu~\cite{CL21} based on the Duhamel's principle. The basic idea is to separate the Lindblad generator into two groups, the first group of which can be expressed as a matrix exponential that immediately induces a completely positive map. By applying the Duhamel's principle repeatedly, a series expansion with arbitrary order of accuracy can be obtained. We prove a rigorous error bound for this truncation. This procedure exhibits some level of resemblance to the time-dependent Hamiltonian simulation method using Dyson series~\cite{kieferova2019simulating}. However, an important focus in the simulation of Lindblad dynamics, due to the presence of jump operators, is to maintain the completely positive property. In addition, we apply Gaussian quadrature, which for any fixed number of quadrature points, has the optimal order of accuracy, to treat the multiple integrals in the series expansion. This approach, compared with the rectangle rule used in~\cite{kieferova2019simulating} and the mid-point rule used in~\cite{CL21},  compressed drastically the number of terms in the Kraus form of the completely positive maps. The other added advantage is that it completely eliminates the need for time clocking, which requires either a compression scheme or a quantum sorting network to implement. 

We consider a very general model of input, namely, the operators are given by their block-encodings. Informally, a unitary $U_A$ is the block-encoding of $A$ with normalizing factor $\alpha$ if the top-left block of $U_A$ is $A/\alpha$. This input model abstract but general enough to assume for most realistic physical models. In fact, traditional input models such as local Hamiltonians, sparse Hamiltonians, and a linear combination of tensor product of Paulis can all be efficiently converted to block-encodings. Suppose the operators $H$, and $L_j$'s are given as block-encodings with corresponding normalizing factors $\alpha_0$, $\alpha_j$'s, respectively. We define the following norm for the purpose of normalizing the magnitude of the Lindbladian in \cref{eq: lindb}.
\begin{align}
  \norm{\mathcal{L}}_{\mathrm{be}} \coloneqq \alpha_0 + \frac{1}{2}\sum_{j=1}^m\alpha_j^2.
\end{align}
Note that a similar norm $\norm{\mathcal{L}}_{\mathrm{pauli}}$ was defined in~\cite{CL17}, which is a special case of $\norm{\mathcal{L}}_{\mathrm{be}}$ in the context of linear combination of unitaries input model.
Our main result is stated as follows.
\begin{theorem}[Informal version of \cref{thm:main}]
  For a Lindbladian $\mathcal{L}$ with $m$ jump operators. Suppose we are given a block-encoding $U_H$ of $H$, and a block-encoding $U_{L_j}$ for each $L_j$. For all $t,\epsilon > 0$, there exists a quantum algorithm that approximates $e^{\mathcal{L}t}$ in terms of the diamond norm using
    $O(\tau\,\mathrm{polylog}(t/\epsilon))$
  queries to $U_H$ and $U_{L_j}$ and 
    $O(m\tau\,\mathrm{polylog}(t/\epsilon))$
    additional 1- and 2-qubit gates, where $\tau \coloneqq t\norm{\mathcal{L}}_{\mathrm{be}}$.
\end{theorem}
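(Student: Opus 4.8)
The plan is to reduce the simulation of $e^{\mathcal{L}t}$ to the repeated application of a short‑time completely positive (CP) map that I can write down explicitly as a \emph{manifestly} CP linear combination of operator products, and then realize that map with block‑encodings. First I would split the generator as $\mathcal{L}=\mathcal{L}_0+\mathcal{L}_1$, where $\mathcal{L}_1(\rho)=\sum_{j=1}^m L_j\rho L_j^{\dagger}$ collects the jump terms and
\[
  \mathcal{L}_0(\rho) = -i\big(H_{\mathrm{eff}}\rho - \rho H_{\mathrm{eff}}^{\dagger}\big), \qquad H_{\mathrm{eff}} \coloneqq H - \frac{i}{2}\sum_{j=1}^{m} L_j^{\dagger}L_j .
\]
The point of this splitting is that $e^{\mathcal{L}_0 s}(\rho)=e^{-iH_{\mathrm{eff}}s}\,\rho\,e^{iH_{\mathrm{eff}}^{\dagger}s}$ is conjugation by the single contraction $e^{-iH_{\mathrm{eff}}s}$ (the imaginary part of $H_{\mathrm{eff}}$ is negative semidefinite), hence automatically CP and trace non‑increasing, and $\mathcal{L}_1$ is CP as well. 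Crucially, a block‑encoding of $H_{\mathrm{eff}}$ built from $U_H$ (products for the $L_j^{\dagger}L_j$, then an LCU over $j$ and $H$) has normalizing factor exactly $\norm{\mathcal{L}}_{\mathrm{be}}=\alpha_0+\frac12\sum_j\alpha_j^2$, which is why this norm governs the complexity. Applying Duhamel's principle repeatedly expresses the evolution over one segment of real time $t/r$ as a truncated series of iterated simplex integrals,
\[
  e^{\mathcal{L}(t/r)} = \sum_{k=0}^{K} \int_{0\le s_1\le\cdots\le s_k\le t/r} e^{\mathcal{L}_0(t/r-s_k)}\,\mathcal{L}_1\, e^{\mathcal{L}_0(s_k-s_{k-1})}\cdots \mathcal{L}_1\, e^{\mathcal{L}_0 s_1}\,\dd s_1\cdots\dd s_k \;+\; R_{K+1},
\]
in which every term of the sum is a composition and integral of CP maps and is therefore CP.

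Next I would establish the two quantitative approximation bounds, both measured in the diamond norm, after choosing $r=O(\tau)$ so that each segment has dimensionless size $\tau/r=\Theta(1)$. For the truncation I would bound $\norm{R_{K+1}}_{\diamond}$ by a quantity of order $(\tau/r)^{K+1}/(K+1)!$: since $e^{\mathcal{L}_0 s}$ is a CP contraction its diamond norm is at most $1$, while each factor of $\mathcal{L}_1$ contributes $\norm{\mathcal{L}_1}_{\diamond}\le\sum_j\norm{L_j}^2\le 2\norm{\mathcal{L}}_{\mathrm{be}}$, and integrating over the simplex of volume $(t/r)^k/k!$ supplies the factorial suppression; taking $K=O\!\big(\log(r/\epsilon)/\log\log(r/\epsilon)\big)$ then forces $\norm{R_{K+1}}_{\diamond}\le\epsilon/(2r)$. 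For the quadrature I would replace each $k$‑fold simplex integral by a scaled Gaussian rule; because the integrand is an analytic, bounded, operator‑valued function whose derivatives are controlled by powers of $\norm{H_{\mathrm{eff}}}\le\norm{\mathcal{L}}_{\mathrm{be}}$, the rule converges super‑polynomially, so $Q=\polylog(r/\epsilon)$ nodes suffice to keep the total quadrature error below $\epsilon/(2r)$. The outcome is an explicit finite CP map $\widetilde\Phi=\sum_\beta w_\beta\,K_\beta(\cdot)K_\beta^{\dagger}$ with nonnegative weights $w_\beta$, where each Kraus operator $K_\beta$ is a product of $O(K)$ factors, each factor being either a propagator $e^{-iH_{\mathrm{eff}}\delta}$ at a quadrature node $\delta$ or a jump operator $L_j$.

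Finally I would implement $\widetilde\Phi$ with standard block‑encoding primitives. Exponentiating the block‑encoding of $H_{\mathrm{eff}}$ by a truncated Taylor series (its argument satisfies $\norm{H_{\mathrm{eff}}}\delta\le\tau/r=O(1)$) yields a block‑encoding of each contraction $e^{-iH_{\mathrm{eff}}\delta}$ at a cost of $O(\polylog(r/\epsilon))$ queries; composing these with the multiplexed jump‑operator oracle realizes a block‑encoding of every $K_\beta$, and the linear‑combination‑of‑channels construction of~\cite{CL17} assembles the block‑encoding of $\widetilde\Phi$ from the weights $w_\beta$. Since $\tau/r=O(1)$ the relevant subnormalization $\sum_\beta w_\beta\approx e^{O(\tau/r)}=\Theta(1)$, so a constant number of rounds of oblivious amplitude amplification for isometries~\cite{CL17} restores the normalization. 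Composing the $r=O(\tau)$ segments and using subadditivity of the diamond norm under composition of channels bounds the total error by $r\cdot(\epsilon/r)=\epsilon$; tallying then gives $O(\tau\,\polylog(t/\epsilon))$ oracle queries and $O(m\tau\,\polylog(t/\epsilon))$ additional $1$‑ and $2$‑qubit gates, the extra factor of $m$ coming only from multiplexing the jump operators in the SELECT step.

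The step I expect to be the main obstacle is the quadrature error analysis. Unlike the one‑dimensional case, I must approximate nested simplex integrals whose integrand is a product of non‑commuting, non‑normal factors $e^{-iH_{\mathrm{eff}}\delta}$ and $L_j$, and I need the error to stay uniformly small in the diamond norm across all orders $k\le K$ while keeping the node count only polylogarithmic. Making the scaled Gaussian quadrature rigorous — choosing the change of variables that maps the simplex to a product domain, verifying the derivative/analyticity estimates that drive super‑polynomial convergence, and propagating them through the subsequent LCU so that the assembled map remains CP with $\Theta(1)$ subnormalization — is where the delicate work lies, and it is precisely the ingredient that lets us dispense with the time‑clocking and compression machinery of earlier approaches.
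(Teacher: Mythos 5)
Your proposal follows essentially the same route as the paper's proof of \cref{thm:main}: the identical drift/jump splitting with $J=-iH_{\mathrm{eff}}$, repeated application of Duhamel's principle yielding the truncated simplex-integral series with error $O\big((2\norm{\mathcal{L}}_{\mathrm{be}}t)^{K+1}/(K+1)!\big)$, scaled Gaussian quadrature with polylogarithmically many nodes, truncated Taylor series for the propagators, and implementation via block-encoded Kraus operators with LCU for channels and oblivious amplitude amplification, composed over $O(\tau)$ constant-size segments with rescaled precision. The argument is correct and matches the paper's construction in all essential respects, including the identification of the quadrature analysis for the nested integrals as the main technical burden.
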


Our approach trades off mathematical simplicity for technical conciseness. In fact, the majority of the analysis is devoted to proving the bound of the truncated series, and the accuracy for using a scaled Gaussian quadrature to approximate each layer of a multiple integral. Once the mathematical treatment is established, we obtain an approximate map for $e^{\mathcal{L}t}$ that is completely positive,  represented in terms of Kraus operators. Then, it is straightforward to use simple quantum primitives including LCU for channels and oblivious amplitude amplification for isometries to implement this completely positive map. Moreover, it is more direct to obtain the gate complexity that scales linearly in $m$, for which the dependence was $O(m^2)$ as presented in~\cite{CW17}~\footnote{We realized that it is possible to improve the dependence to $O(m)$ in~\cite{CW17} by a more careful construction of the encoding gate in their compression scheme.}.

In this paper, we focus on time-independent Lindbladians. It is worth noting that our approach easily generalizes to time-dependent Lindbladians. We sketch this generalization in \cref{sec:timedependent}.

The rest of the paper is structured as follows. We introduce the preliminaries, including an introduction to Duhamel's principle and the algorithmic tools in \cref{sec:prelim}. In \cref{sec:series}, we present the series expansion based on Duhamel's principle and prove the error bound. In \cref{sec:quadrature}, we show how to use scaled Gaussian quadrature to approximate multiple integrals. The main theorem is proved in \cref{sec:alg} and our simulation algorithm is presented in the proof. Finally, we sketch how to generalize our method to simulating time-dependent Lindbladians in \cref{sec:timedependent}.

\section{Preliminaries}
\label{sec:prelim}
In this paper, we use $\norm{A}$ to denote the \emph{spectral norm} of a square matrix $A$, and we use $\norm{A}_1$ to denote its \emph{trace norm}. We use $I$ to denote the identity matrix and we leave its dimension implicitly when it is clear from the context. We use calligraphic font, such as $\mathcal{L}$, $\mathcal{J}$ to denote \emph{superoperators}. In particular, we use $\mathcal{I}$ to denote the \emph{identity map}. We use $\mathcal{K}[A]$ to denote the completely positive map induced by the \emph{Kraus operator} $A$, i.e.,
\begin{align}
  \mathcal{K}[A](\rho) \coloneqq A\rho A^{\dag}
\end{align}
for all $\rho$. The \emph{induced trace norm} of a superoperator $\mathcal{M}$, denoted by $\norm{\mathcal{M}}_1$, is defined as $\norm{\mathcal{M}}_1 \coloneqq \max\{\norm{\mathcal{M}(A)}_1: \norm{A} \leq 1\}$. The \emph{diamond norm} of a superoperator $\mathcal{M}$, denoted by $\norm{\mathcal{M}}_\diamond$, is defined as $\norm{\mathcal{M}}_\diamond \coloneqq \norm{\mathcal{M}\otimes \mathcal{I}}_1$, where the identity map $\mathcal{I}$ acts on a Hilbert space with the same dimension as the space $\mathcal{M}$ is acting on.

We use \emph{block-encodings} as the efficient description of the operators. Intuitively, we say a unitary $U_A$ block-encodes a matrix $A$ if $A$ appears in the upper-left block of $A$, i.e.,
\begin{align}
  U_A = 
  \begin{pmatrix}
    A/\alpha & \cdot \\
    \cdot & \cdot
  \end{pmatrix},
\end{align}
where $\alpha$ is the \emph{normalizing factor}. More formally, an $(n+b)$-qubit unitary $U_A$ is an $(\alpha, b, \epsilon)$-block-encoding of an $n$-qubit operator $A$ if
\begin{align}
  \norm{A - \alpha(\bra*{0^{\otimes b}}\otimes I)U_A(\ket*{0^{\otimes b}}\otimes I)} \leq \epsilon,
\end{align}
where the identity operator is acting on $n$ qubits.

\subsection{Duhamel's principle} 
\label{sec:Duhamel}
For a differential equation written in the form of,
\begin{equation}
    u'(t) = Lu  + f\big(t,u(t)\big), \quad u(0)= u_0,
\end{equation}
where $L$ is a linear operator, but $f$ can in principle be a nonlinear function of $u.$

The Duhamel's principle allows to separate the contribution to the solution from the initial condition and the contribution from the non-homogeneous term. Specifically, we can write the solution as 
\begin{equation}
    u=v+w,
\end{equation}
 where $v$ satisfies the equation without $f$,
\begin{equation}\label{eq: v}
    v'(t) = Lv, \quad v(0)= u_0,
\end{equation}
while $w$ follows the equation
\begin{equation}\label{eq: w}
    w'(t) = Lw  + f\big(t,u(t)\big), \quad w(0)= 0,
\end{equation}

The solution $v$, due to the fact that \cref{eq: v} is linear and homogeneous,  can be simply written as $v(t)= e^{tL}u_0.$ 
On the other hand, the equation for $w$ can be rewritten as $ \frac{\dd}{\dd t}\left(e^{-tL}w(t)\right)= e^{-tL} f\big(t,u(t)\big),$ from which a direct integration yields,
\begin{align}
w(t) = \int_0^t g(t,s)  \dd s, \quad g(t,s)\coloneqq e^{(t-s)L} f\big(s,u(s)\big).
\end{align}
Notice that when $s$ is held fixed, the function $g(t,s)$ also follows a homogeneous equation similar to \cref{eq: v},
\begin{equation}
    \frac{\dd}{\dd t} g(t,s)= L g(t,s), \quad \lim_{t\to s} g(t,s)= f\big(s,u(s)\big),
\end{equation}
which is typically how the Duhamel's principle is expressed. 

The derivation of our algorithm will heavily involve the Duhamel's principle, which can be summarized into the following formula,
\begin{equation}\label{eq: duhamel}
    u(t) = e^{tL}u_0 +  \int_0^t e^{(t-s)L} f\big(s,u(s)\big)  \dd s.
\end{equation}

\subsection{Algorithmic tools}
Given a completely positive map whose Kraus operators are given as block-encodings, we use the following lemma to probabilistically implement this complete positive map. Note that this lemma is a reformulation of the LCU for channels~\cite{CL17} in the language of block-encodings.
\begin{lemma}[Implementing completely positive maps via block-encodings of Kraus operators~\cite{LW21}]
  \label{lemma:block-encoding-channel}
  Let $A_1, \ldots, A_{m} \in \bbc^{2^n}$ be the  Kraus operators of a completely positive map. Let $U_1, \ldots, U_m \in \bbc^{2^{n+n'}}$ be their corresponding $(s_j, n', \epsilon)$-block-encodings, i.e., 
  \begin{align}
    \norm{A_j - s_j (\bra{0}\otimes I)U_j\ket{0}\otimes I)} \leq \epsilon, \quad \text{ for all $1 \leq j \leq m$}.
  \end{align}
  Let $\ket{\mu}\coloneqq \frac{1}{\sqrt{\sum_{j=1}^{m}s_j^2}}\sum_{j=1}^{m}s_j\ket{j}$. Then $(\sum_{j=1}^m\ketbra{j}{j}\otimes U_j)\ket{\mu}\ket{0}\otimes I$ implements this completely positive map in the sense that
  \begin{align}\label{eq: M-apply}
    \norm{I\otimes \bra{0}\otimes I \left(\sum_{j=1}^m\ketbra{j}{j}\otimes U_j\right) \ket{\mu}\ket{0}\ket{\psi} - \frac{1}{\sqrt{\sum_{j=1}^ms_j^2}}\sum_{j=1}^m \ket{j}A_j\ket{\psi}} \leq \frac{m\epsilon}{\sqrt{\sum_{j=1}^ms_j^2}}
  \end{align}
  for all $\ket{\psi}$.
\end{lemma}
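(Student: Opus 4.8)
The plan is to trace the action of the circuit register by register and compare the resulting flagged state directly against the target Kraus decomposition; the content of the lemma lies entirely in this bookkeeping. There are three registers: a control register carrying the index $j$, an $n'$-qubit ancilla register on which each block-encoding acts, and the system register holding $\ket{\psi}$. First I would substitute the definition of $\ket{\mu}$ and expand the uncollapsed state as
\[
\left(\sum_{j=1}^m \ketbra{j}{j}\otimes U_j\right)\ket{\mu}\ket{0}\ket{\psi} = \frac{1}{\sqrt{\sum_{j}s_j^2}}\sum_{j=1}^m s_j\,\ket{j}\,U_j\big(\ket{0}\ket{\psi}\big),
\]
since the controlled operator applies $U_j$ to the ancilla-plus-system registers exactly on the branch where the control is $\ket{j}$.

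Next I would apply the projection $I\otimes\bra{0}\otimes I$ onto the all-zeros state of the ancilla register, obtaining $\frac{1}{\sqrt{\sum_j s_j^2}}\sum_{j} s_j\,\ket{j}\,(\bra{0}\otimes I)U_j(\ket{0}\otimes I)\ket{\psi}$. The key point --- and the reason $\ket{\mu}$ is weighted precisely by the normalizing factors $s_j$ --- is that the scalar $s_j$ multiplying each branch pairs with the $j$-th block-encoding so that $s_j(\bra{0}\otimes I)U_j(\ket{0}\otimes I)$ approximates $A_j$ in spectral norm to within $\epsilon$, by the defining inequality of an $(s_j,n',\epsilon)$-block-encoding. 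Consequently, for a unit vector $\ket{\psi}$, the vector $s_j(\bra{0}\otimes I)U_j(\ket{0}\otimes I)\ket{\psi}$ differs from the intended $A_j\ket{\psi}$ by norm at most $\epsilon$.

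To finish, I would subtract the target $\frac{1}{\sqrt{\sum_j s_j^2}}\sum_j \ket{j}A_j\ket{\psi}$ term by term, writing the difference as
\[
\frac{1}{\sqrt{\sum_j s_j^2}}\sum_{j=1}^m \ket{j}\otimes\Big(s_j(\bra{0}\otimes I)U_j(\ket{0}\otimes I) - A_j\Big)\ket{\psi},
\]
and bound its norm. A plain triangle inequality over the $m$ branches, each contributing at most $\epsilon$, yields the stated bound $m\epsilon/\sqrt{\sum_j s_j^2}$; using instead the orthonormality of the control states $\ket{j}$, so that the squared norm splits as a sum over $j$ of the squared branch contributions, would sharpen this to $\sqrt{m}\,\epsilon/\sqrt{\sum_j s_j^2}$, though the cruder estimate already suffices. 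I do not expect a genuine obstacle here: the only points demanding care are keeping the three registers and the placement of the $\ket{0}/\bra{0}$ projectors straight so that the block-encoding error attaches to the correct factor, and observing that the $s_j$-weighting of $\ket{\mu}$ is exactly what cancels the block-encoding normalizations to leave a bona fide Kraus decomposition on the $\ket{0}$-flagged branch. That cancellation is the conceptual crux, but it is designed into the construction rather than discovered in the proof.
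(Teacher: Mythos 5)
Your proposal is correct. The paper does not actually contain a proof of this lemma---it is imported verbatim from the cited reference [LW21] as an algorithmic tool---so there is nothing internal to compare against; your register-by-register computation (expand the controlled unitary on $\ket{\mu}\ket{0}\ket{\psi}$, project the ancilla onto $\bra{0}$, and bound the branch-wise differences $\bigl(s_j(\bra{0}\otimes I)U_j(\ket{0}\otimes I)-A_j\bigr)\ket{\psi}$ by $\epsilon$ each) is exactly the standard argument for this statement. Your side remark is also right: since the control states $\ket{j}$ are orthonormal, the error vectors on distinct branches are orthogonal, so the bound sharpens to $\sqrt{m}\,\epsilon/\sqrt{\sum_j s_j^2}$; the lemma's stated $m\epsilon$ bound is simply the cruder triangle-inequality estimate and suffices for the paper's purposes.
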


The following lemma shows how to construct the block-encoding as a linear combination of block-encodings.
\begin{lemma}[Block-encoding of a sum of block-encodings~\cite{LW21}]
  \label{lemma:sum-to-be}
  Suppose $A \coloneqq \sum_{j=1}^m y_j A_j \in \bbc^{2^n\times 2^n}$, where $A_j \in \bbc^{2^n \times 2^n}$ and $y_j > 0$ for all $j \in \{1, \ldots m\}$. Let $U_j$ be an $(\alpha_j, a, \epsilon)$-block-encoding of $A_j$, and $B$ be a unitary acting on $b$ qubits (with $m \leq 2^b-1$) such that $B\ket{0} = \sum_{j=0}^{2^b-1}\sqrt{\alpha_jy_j/s}\ket{j}$, where $s = \sum_{j=1}^my_j\alpha_j$. Then a $(\sum_{j}y_j\alpha_j, a+b, \sum_{j}y_j\alpha_j\epsilon)$-block-encoding of $\sum_{j=1}^my_jA_j$ can be implemented with a single use of $\sum_{j=0}^{m-1}\ketbra{j}{j}\otimes U_j + ((I - \sum_{j=0}^{m-1}\ketbra{j}{j})\otimes I _{\bbc^{2^a}}\otimes I_{\bbc^{2^{n}}})$ plus twice the cost for implementing $B$.
\end{lemma}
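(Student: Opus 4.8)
The plan is to build the block-encoding via the standard prepare--select--unprepare (LCU) pattern and then verify its top-left block together with the induced error. Write $W$ for the select operator displayed in the statement, namely
\begin{equation}
  W = \sum_{j}\ketbra{j}{j}\otimes U_j + \left(\left(I - \sum_{j}\ketbra{j}{j}\right)\otimes I\otimes I\right),
\end{equation}
which is unitary because it is block-diagonal with the unitary blocks $U_j$ on the selected index subspace and the identity elsewhere. I would take the candidate block-encoding to be $V \coloneqq (B^{\dag}\otimes I)\,W\,(B\otimes I)$, where $B$ acts on the $b$-qubit index register and the identities act on the $a$-qubit block-encoding ancilla together with the $n$-qubit system register (all $U_j$ share the same $a$-qubit ancilla, which is exactly why the total ancilla count is $a+b$). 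Before anything else I would confirm that $B$ prepares a legitimate state: since $\sum_j \alpha_j y_j/s = 1$ by the definition $s = \sum_j y_j\alpha_j$ and the positivity of the $y_j$ (and of the normalizing factors $\alpha_j$), the vector $\sum_j \sqrt{\alpha_j y_j/s}\,\ket{j}$ is a unit vector, with the remaining indices $j>m$ carrying zero amplitude.

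Next I would compute the top-left block of $V$. Applying $B\otimes I$ to $\ket{0}\otimes\ket{0}\otimes I$ produces $\big(\sum_j \sqrt{\alpha_j y_j/s}\,\ket{j}\big)\otimes\ket{0}\otimes I$; applying $W$ replaces each $\ket{j}\otimes\ket{0}\otimes I$ by $\ket{j}\otimes U_j(\ket{0}\otimes I)$; and closing with $\bra{0}B^{\dag}\otimes\bra{0}\otimes I$ uses the orthonormality $\braket{k}{j}=\delta_{kj}$ on the index register to collapse the double sum, leaving
\begin{equation}
  (\bra{0}\otimes\bra{0}\otimes I)\,V\,(\ket{0}\otimes\ket{0}\otimes I) = \sum_{j=1}^m \frac{\alpha_j y_j}{s}\,(\bra{0}\otimes I)\,U_j\,(\ket{0}\otimes I).
\end{equation}

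For the error I would invoke the single-block-encoding guarantee in the form $(\bra{0}\otimes I)U_j(\ket{0}\otimes I) = A_j/\alpha_j + E_j$ with $\norm{E_j}\leq\epsilon$. Substituting, the ideal contribution is $\sum_j (\alpha_j y_j/s)(A_j/\alpha_j) = A/s$, so that $s$ times the top-left block differs from $A=\sum_j y_j A_j$ by exactly $\sum_j \alpha_j y_j E_j$. By the triangle inequality and $\norm{E_j}\leq\epsilon$ this discrepancy is bounded in spectral norm by $\sum_j y_j\alpha_j\epsilon$, which is precisely the claimed parameter set: $V$ is a $(\sum_j y_j\alpha_j,\,a+b,\,\sum_j y_j\alpha_j\epsilon)$-block-encoding of $A$. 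The cost claim is then immediate from the definition of $V$, which uses $W$ once and $B$ twice (as $B$ and as $B^{\dag}$).

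I do not expect a genuine obstacle, since this is the standard LCU-for-block-encodings identity; the only points demanding care are bookkeeping. One must keep the shared block-encoding ancilla distinct from the index register so that the register sizes add to $a+b$; one must verify that the prepared amplitudes are normalized (which is where $s=\sum_j y_j\alpha_j$ and the positivity $y_j,\alpha_j>0$ enter); and one must track the error through the block-encoding definition, where the per-term factor $\alpha_j$ from $U_j$ cancels against the $1/\alpha_j$ in $A_j/\alpha_j$ for the ideal part but survives in the error term, yielding the weighted bound $\sum_j y_j\alpha_j\epsilon$ rather than a single $\epsilon$.
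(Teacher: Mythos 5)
The paper never proves this lemma itself --- it is imported from \cite{LW21} (it is a non-uniform-normalization variant of the linear-combination-of-block-encodings lemma of Gily\'en et al.~\cite{GSLW19}), so there is no in-paper proof to compare against; I am judging your argument on its own. Your prepare--select--unprepare construction $V=(B^{\dag}\otimes I)\,W\,(B\otimes I)$ is exactly the standard argument behind such lemmas, and its structural content is right: $W$ is unitary, the normalization check $\sum_j \alpha_j y_j/s=1$ is the correct place where positivity of the $y_j$ and the definition of $s$ enter, the double sum collapses to $\sum_j(\alpha_j y_j/s)(\bra{0}\otimes I)U_j(\ket{0}\otimes I)$ by orthonormality of the index register, the ancilla count is $a+b$ because the $U_j$ share the $a$-qubit ancilla, and the cost is one use of $W$ plus two uses of $B$ (as $B$ and $B^{\dag}$).

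The one step you should fix is the error bookkeeping. Under this paper's definition, ``$U_j$ is an $(\alpha_j,a,\epsilon)$-block-encoding of $A_j$'' means $\norm{A_j-\alpha_j(\bra{0}\otimes I)U_j(\ket{0}\otimes I)}\leq\epsilon$; hence writing $(\bra{0}\otimes I)U_j(\ket{0}\otimes I)=A_j/\alpha_j+E_j$ gives $\norm{E_j}\leq\epsilon/\alpha_j$, \emph{not} $\norm{E_j}\leq\epsilon$ as you assert --- your inequality corresponds to the alternative convention in which the error is measured on the normalized block $A_j/\alpha_j$. Carrying the paper's convention through your own algebra (keep the product $\alpha_jE_j$ together), the discrepancy $\sum_j\alpha_j y_jE_j$ is bounded by $\sum_j y_j\norm{\alpha_jE_j}\leq\bigl(\sum_j y_j\bigr)\epsilon$, which is a \emph{tighter} error parameter than the claimed $\bigl(\sum_j y_j\alpha_j\bigr)\epsilon$ whenever every $\alpha_j\geq 1$ (the typical situation, and the regime in which this lemma is applied in the paper), but does not coincide with it in general. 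So the construction is correct and the lemma follows in the regime where it is used; to make the write-up airtight, either state the tighter bound $\bigl(\sum_j y_j\bigr)\epsilon$ derived under the paper's definition, or state explicitly the convention (error on the normalized block) under which your $\norm{E_j}\leq\epsilon$ and hence the literal parameter $\sum_j y_j\alpha_j\epsilon$ is recovered.
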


Finally, we need the oblivious amplitude amplification for isometries.
\begin{lemma}[Oblivious amplitude amplification for isometries~{\cite{CW17}}]
  \label{lemma:aa}
  For all $a, b\in \mathbb{N}_+$, let $\ket*{\hat{0}} \coloneqq \ket*{0}^{\otimes a}$ and $\ket*{\mu}$ be arbitrary $b$-qubit state. For any $n$-qubit state $\ket*{\psi}$, let $\ket*{\hat{\psi}} \coloneqq \ket*{\hat{0}}\ket*{\mu}\ket*{\psi}$. Also define $\ket*{\hat{\phi}} \coloneqq \ket*{\hat{0}}\ket*{\phi}$, where $\ket*{\phi}$ is a $(b+n)$-qubit state. Let $P_0 \coloneqq \ketbra{\hat{0}}{\hat{0}}\otimes I_{2^b} \otimes I_{2^n}$ and $P_1 \coloneqq \ketbra{\hat{0}}{\hat{0}}\otimes \ketbra{\hat{\mu}}{\hat{\mu}} \otimes I_{2^n}$ be two projectors. Suppose there exists an operator $W$ satisfying
  \begin{align}
    W\ket*{\hat{\phi}} = \frac{1}{2}\ket*{\hat{\phi}} + \sqrt{\frac{3}{4}}\ket*{\hat{\phi}^{\bot}},
  \end{align}
  where $\ket*{\hat{\phi}}$ satisfies $P_0\ket*{\hat{\phi}^{\bot}} = 0$. Then it holds that
  \begin{align}
    -W(I-2P_1)W^{\dag}(I-2P_0)W\ket*{\hat{\psi}} = \ket*{\hat{\phi}}.
  \end{align}
\end{lemma}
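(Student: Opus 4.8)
The plan is to establish the operator identity by tracking the action of $G \coloneqq -W(I-2P_1)W^{\dagger}(I-2P_0)W$ on $\ket{\hat\psi}$, working inside the two-dimensional subspace spanned by $\ket{\hat\phi}$ and $\ket{\hat\phi^{\bot}}$ together with their $W^{\dagger}$-images. Single-step amplification should succeed precisely because the initial ``good'' amplitude is $\tfrac12 = \sin(\pi/6)$, so that the three reflection-type operations advance the effective angle to $\sin(\pi/2)=1$. First I would make this concrete by peeling off the operators from the right, and then reduce the entire claim to a single key identity.

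Since $\ket{\hat\phi}=\ket{\hat0}\ket{\phi}$ lies in the range of $P_0$ while $P_0\ket{\hat\phi^{\bot}}=0$, the reflection $I-2P_0$ flips only the $\ket{\hat\phi}$-component, giving $(I-2P_0)W\ket{\hat\psi} = -\tfrac12\ket{\hat\phi}+\tfrac{\sqrt3}{2}\ket{\hat\phi^{\bot}}$. Applying $W^{\dagger}$ and using the hypothesis rewritten as $\ket{\hat\psi}=\tfrac12 W^{\dagger}\ket{\hat\phi}+\tfrac{\sqrt3}{2}W^{\dagger}\ket{\hat\phi^{\bot}}$, the right-hand side collapses to $W^{\dagger}(I-2P_0)W\ket{\hat\psi}=\ket{\hat\psi}-W^{\dagger}\ket{\hat\phi}$. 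Next I would apply $I-2P_1$, using $P_1\ket{\hat\psi}=\ket{\hat\psi}$ (as $\ket{\hat\psi}=\ket{\hat0}\ket\mu\ket\psi$ lies in the range of $P_1$), which turns the target $G\ket{\hat\psi}=\ket{\hat\phi}$ into the single equivalent statement
\begin{align}
  P_1 W^{\dagger}\ket{\hat\phi} = \tfrac12\ket{\hat\psi}.
\end{align}

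The main obstacle is establishing this key identity, because it cannot follow from the single relation $W\ket{\hat\psi}=\tfrac12\ket{\hat\phi}+\tfrac{\sqrt3}{2}\ket{\hat\phi^{\bot}}$ alone; it requires the \emph{oblivious}/isometry structure underlying the hypothesis, namely that $W$ acts uniformly on inputs of the form $\ket{\hat0}\ket\mu\ket\psi$ as $\tfrac12$ times a fixed isometry $V$ (with $\ket{\phi}=V\ket\psi$) on the block selected by $\ket{\hat0}$. Concretely, I would define the forward block $N \coloneqq (\bra{\hat0}\otimes I)W(\ket{\hat0}\ket\mu\otimes I)$ and read off from obliviousness that $N=\tfrac12 V$; its adjoint is exactly the block $(\bra{\hat0}\bra\mu\otimes I)W^{\dagger}(\ket{\hat0}\otimes I)=\tfrac12 V^{\dagger}$ that governs $P_1 W^{\dagger}\ket{\hat\phi}$. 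Then $P_1 W^{\dagger}\ket{\hat\phi}=\ket{\hat0}\ket\mu\otimes(\tfrac12 V^{\dagger}V\ket\psi)=\tfrac12\ket{\hat0}\ket\mu\ket\psi=\tfrac12\ket{\hat\psi}$, where the isometry property $V^{\dagger}V=I$ performs the final collapse. Substituting back and tracking the overall sign then yields $-W(I-2P_1)W^{\dagger}(I-2P_0)W\ket{\hat\psi}=\ket{\hat\phi}$. As a sanity check I would also verify the geometric picture via Jordan's lemma: the two reflections restrict to a rotation by $2\theta$ in the relevant plane with $\theta=\pi/6$, confirming that one round suffices.
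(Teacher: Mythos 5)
Your proof is correct, but there is nothing in the paper to compare it against: the paper states this lemma as an imported tool and cites~\cite{CW17} for its proof, giving no argument of its own. Your derivation is essentially the standard oblivious-amplitude-amplification argument specialized to initial amplitude $\tfrac12=\sin(\pi/6)$: the computation $(I-2P_0)W\ket*{\hat{\psi}}=-\tfrac12\ket*{\hat{\phi}}+\tfrac{\sqrt3}{2}\ket*{\hat{\phi}^{\bot}}$, the collapse $W^{\dag}(I-2P_0)W\ket*{\hat{\psi}}=\ket*{\hat{\psi}}-W^{\dag}\ket*{\hat{\phi}}$ via unitarity of $W$, and the reduction of the whole claim to $P_1W^{\dag}\ket*{\hat{\phi}}=\tfrac12\ket*{\hat{\psi}}$ are all verified correctly. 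More importantly, you put your finger on the genuine subtlety: that key identity is \emph{not} derivable from the single-state hypothesis, and the lemma must be read as universally quantified over $\ket{\psi}$ with $\ket{\phi}=V\ket{\psi}$ for one fixed isometry $V$ (this is what ``for isometries'' means; the statement leaves it implicit, and also contains the typo $W\ket*{\hat{\phi}}$ where $W\ket*{\hat{\psi}}$ is intended). Your block-operator mechanism --- $N\coloneqq(\bra*{\hat{0}}\otimes I)W(\ket*{\hat{0}}\ket{\mu}\otimes I)=\tfrac12 V$, hence by taking adjoints $P_1W^{\dag}\ket*{\hat{\phi}}=\tfrac12(\ket*{\hat{0}}\ket{\mu}\otimes I)V^{\dag}V\ket{\psi}=\tfrac12\ket*{\hat{\psi}}$ --- is precisely the isometry analogue of the key structural lemma underlying the proofs in~\cite{BCC15,CW17}, with $V^{\dag}V=I$ doing the essential work. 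So your argument is a faithful, self-contained reconstruction of the proof the paper delegates to its reference.
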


\section{Higher order series expansion based on Duhamel's principle}
\label{sec:series}

The goal of our quantum algorithm is to simulate the Lindblad equation defined in \cref{eq: lindb}. In the context of quantum trajectory theory \cite{plenio1998quantum}, we view the commutator and the anti-commutator terms as the \emph{drifting part}, and  the $L_j\rho L_j^{\dag}$ terms are regarded as \emph{jump part}.  Accordingly, motivated by the numerical method in~\cite{CL21}. We decompose $\mathcal{L}$ into two superoperators, the \emph{drifting part} $\mathcal{L}_{\mathrm{D}}$ and the \emph{jump part} $\mathcal{L}_{\mathrm{J}}$. Namely,
\begin{align}\label{eq: LLJLL}
  \mathcal{L} = \mathcal{L}_\mathrm{D} &+ \mathcal{L}_\mathrm{J}, \quad \text{ and}\\
  \mathcal{L}_\mathrm{D}(\rho) \coloneqq J\rho + \rho J^{\dag}, &\quad \mathcal{L}_\mathrm{J}(\rho) \coloneqq \sum_{j=1}^m L_j\rho L_j^{\dag}.
\end{align}
Here  we define $J$ as
\begin{align}
  J\coloneqq -i H_{\mathrm{eff}}. 
\end{align}
where an \emph{effective Hamiltonian} $H_{\mathrm{eff}}$ is given by,
\begin{align}
  H_{\mathrm{eff}} \coloneqq H + \frac{1}{2i}\sum_{j=1}^m L_j^{\dag}L_j.
\end{align}
Thus $\mathcal{L}_\mathrm{D}$ can be viewed as a generalized anti-commutator.

By treating the term with $\mathcal{L}_\mathrm{D}$ as a non-homogeneous term,  the Duhamel's principle in \cref{eq: duhamel} can be applied, and we get,
\begin{align}\label{eq: formula}
  \rho_t = e^{\mathcal{L}t}(\rho_0) = e^{\mathcal{L}_\mathrm{D}t}(\rho_0) + \int_0^t e^{\mathcal{L}_\mathrm{D}(t-s)}(\mathcal{L}_\mathrm{J} \rho_s) \dd s.
\end{align}

Note that the solution $\rho_s$ is still involved in the integral on the right hand side. Therefore, this equation does not provide an explicit formula for the solution; Rather, it offers an integral representation of the Lindblad equation. Nevertheless, one can apply \cref{eq: duhamel}  again to $\rho_s$ in the integral.   
After $K$ such iterations, we arrive at
\begin{align}
  \begin{aligned}\label{eq: duhem}
    \rho_t &= e^{\mathcal{L}_\mathrm{D}t}(\rho_0) \\
             &+\sum_{k=1}^K \int_{0 \leq s_1 \leq \cdots \leq s_k \leq t} e^{\mathcal{L}_\mathrm{D}(t-s_k)} \mathcal{L}_\mathrm{J} e^{\mathcal{L}_\mathrm{D}(s_k-s_{k-1})} \mathcal{L}_\mathrm{J} \cdots e^{\mathcal{L}_\mathrm{D}(s_2-s_1)} \mathcal{L}_\mathrm{J} e^{\mathcal{L}_\mathrm{D}(s_1)} (\rho_0) \dd s_1 \cdots \dd s_k \\
             &+\int_{0 \leq s_1 \leq \cdots \leq s_{K+1} \leq t} e^{\mathcal{L}_\mathrm{D}(t-s_{K+1})} \mathcal{L}_\mathrm{J} e^{\mathcal{L}_\mathrm{D}(s_{K+1}-s_K)} \mathcal{L}_\mathrm{J} \cdots e^{\mathcal{L}_\mathrm{D}(s_2-s_1)}  \mathcal{L}_\mathrm{J}(\rho_{s_1}) \dd s_1 \cdots \dd s_{K+1}.
    \end{aligned}
\end{align}
Now notice that the first two terms on the right hand side only depend on the initial density matrix, and thus they are amenable to numerical approximations. 
Meanwhile, the last term will be regarded as a truncation error, which will be bounded later. 

We first derive the Kraus representation of $e^{\mathcal{L}_\mathrm{D}t}$, which is the first term in the expansion, but also appears in each integral.
The Kraus form can be obtained from a Taylor series. To see this, let us consider the case where $\mathcal{L}_\mathrm{D}$ is acting on a pure state $\ket{\psi}$:
\begin{align}
  \label{eq:scho}
  \frac{\dd}{\dd t}\ket{\psi} = J\ket{\psi}.
\end{align}
Using the chain rule, we have
\begin{align}
  \label{eq:vonneu}
  \frac{\dd}{\dd t}\ketbra{\psi}{\psi} = J\ketbra{\psi}{\psi} + \ketbra{\psi}{\psi}J^{\dag} = \mathcal{L}_\mathrm{D}(\ketbra{\psi}{\psi}).
\end{align}
The above two equations also hold for general states $\rho$ by linearity. Hence, solving \cref{eq:scho} and \cref{eq:vonneu} yields
\begin{align}
  e^{\mathcal{L}_\mathrm{D}t} = \mathcal{K}\left[e^{tJ}\right].
\end{align}

Now, we adopt the notation from \cite{CL21},
\begin{align}
  \label{eq:fk}
   \mathcal{F}_k(s_k, \ldots, s_1) \coloneqq
   \mathcal{K}[e^{J(t-s_k)}] \mathcal{L}_\mathrm{J} \mathcal{K}[e^{J(s_k-s_{k-1})}] \mathcal{L}_\mathrm{J} \cdots \mathcal{K}[e^{J(s_2-s_1)}] \mathcal{L}_\mathrm{J} \mathcal{K}[e^{J(s_1-0)}].
\end{align}
  
We further define
\begin{align}\label{eq: GK}
  \mathcal{G}_{K}(t) \coloneqq \mathcal{K}[e^{Jt}] + \sum_{k=1}^K \int_{0 \leq s_1 \leq \cdots \leq s_k \leq t} \mathcal{F}_{k}(s_k, \ldots, s_1)\, \dd s_1 \cdots \dd s_k.
\end{align}
At this point, the problem is reduced to approximating $e^{\mathcal{L}t}$ by $\mathcal{G}_M(t)$. We first prove an error bound.
\begin{restatable}{lemma}{higherorder}
  \label{lemma:gk}
  It holds that
  \begin{align}
    \norm{e^{\mathcal{L}t} - \mathcal{G}_K(t)}_{\diamond} \leq \frac{(2\norm{\mathcal{L}}_{\mathrm{be}} t)^{K+1}}{(K+1)!}.
  \end{align}
\end{restatable}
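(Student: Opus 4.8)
The plan is to recognize that $\mathcal{G}_K(t)$ is precisely the sum of the first two terms of the $K$-fold iterated Duhamel expansion in \cref{eq: duhem}, so that the error $e^{\mathcal{L}t}-\mathcal{G}_K(t)$ is exactly the third (remainder) term, read as a superoperator. Concretely, using $e^{\mathcal{L}_\mathrm{D}r}=\mathcal{K}[e^{Jr}]$ and $\rho_{s_1}=e^{\mathcal{L}s_1}(\rho_0)$, I would define the remainder superoperator
\begin{align}
  \mathcal{R}_K(t) \coloneqq \int_{0\le s_1\le\cdots\le s_{K+1}\le t} \mathcal{K}[e^{J(t-s_{K+1})}]\,\mathcal{L}_\mathrm{J}\,\mathcal{K}[e^{J(s_{K+1}-s_K)}]\,\mathcal{L}_\mathrm{J}\cdots \mathcal{K}[e^{J(s_2-s_1)}]\,\mathcal{L}_\mathrm{J}\, e^{\mathcal{L}s_1}\,\dd s_1\cdots \dd s_{K+1},
\end{align}
and verify the identity $e^{\mathcal{L}t}=\mathcal{G}_K(t)+\mathcal{R}_K(t)$ by induction on $K$, where each inductive step simply applies \cref{eq: duhamel} once more to the innermost, still-unexpanded propagator $e^{\mathcal{L}s_1}$. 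This reduces the whole problem to bounding $\norm{\mathcal{R}_K(t)}_\diamond$.

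The strategy for the bound is to pass the diamond norm inside the integral (integral form of the triangle inequality) and then invoke submultiplicativity of the diamond norm under composition, so that it suffices to control each factor separately. I would assemble three elementary estimates. First, for any operator $A$ one has $\norm{\mathcal{K}[A]}_\diamond=\norm{A}^2$, combined with the contractivity $\norm{e^{Jt}}\le 1$ for $t\ge 0$: since $J+J^{\dag}=-\sum_{j=1}^m L_j^{\dag}L_j\preceq 0$, the quantity $\tfrac{\dd}{\dd t}\norm{e^{Jt}v}^2=\bra{v}e^{J^\dag t}(J+J^\dag)e^{Jt}\ket{v}\le 0$ is nonincreasing, so each drift factor $\mathcal{K}[e^{J\cdot}]$ contributes at most $1$. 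Second, $\norm{\mathcal{L}_\mathrm{J}}_\diamond\le\sum_{j=1}^m\norm{\mathcal{K}[L_j]}_\diamond=\sum_{j=1}^m\norm{L_j}^2\le\sum_{j=1}^m\alpha_j^2\le 2\norm{\mathcal{L}}_{\mathrm{be}}$, where $\norm{L_j}\le\alpha_j$ comes from the block-encoding guarantee and the final inequality is immediate from the definition of $\norm{\mathcal{L}}_{\mathrm{be}}$. Third, $\norm{e^{\mathcal{L}s}}_\diamond=1$ because $e^{\mathcal{L}s}$ is a quantum channel for every $s\ge 0$.

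Combining these, the remainder integrand contains $K+1$ drift propagators (each contributing at most $1$), $K+1$ jump factors (each at most $2\norm{\mathcal{L}}_{\mathrm{be}}$), and one trailing channel $e^{\mathcal{L}s_1}$ (contributing $1$), so its diamond norm is bounded by the constant $(2\norm{\mathcal{L}}_{\mathrm{be}})^{K+1}$. Integrating this constant over the ordered simplex $\{0\le s_1\le\cdots\le s_{K+1}\le t\}$, whose Lebesgue measure is $t^{K+1}/(K+1)!$, gives
\begin{align}
  \norm{e^{\mathcal{L}t}-\mathcal{G}_K(t)}_\diamond=\norm{\mathcal{R}_K(t)}_\diamond\le (2\norm{\mathcal{L}}_{\mathrm{be}})^{K+1}\cdot\frac{t^{K+1}}{(K+1)!}=\frac{(2\norm{\mathcal{L}}_{\mathrm{be}}\,t)^{K+1}}{(K+1)!},
\end{align}
which is exactly the claimed estimate.

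The step I expect to require the most care is the first one: establishing the exact operator-level identity $e^{\mathcal{L}t}=\mathcal{G}_K(t)+\mathcal{R}_K(t)$, i.e.\ correctly bookkeeping the nesting of the time-ordered integrals and confirming that after $K$ Duhamel iterations the remainder is precisely the displayed term with $\rho_{s_1}$ left unexpanded and $e^{\mathcal{L}_\mathrm{D}}$ replaced by $\mathcal{K}[e^{J\cdot}]$. Once that identity is pinned down, the remaining analysis is routine, with the only genuinely analytic point being the contractivity $\norm{e^{Jt}}\le1$, which follows from the dissipative structure $J+J^\dag\preceq 0$.
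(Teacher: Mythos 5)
Your proposal is correct and follows essentially the same route as the paper's proof: identify $e^{\mathcal{L}t}-\mathcal{G}_K(t)$ with the Duhamel remainder term in \cref{eq: duhem}, bound the drift propagators by $1$ via dissipativity, bound $\norm{\mathcal{L}_\mathrm{J}}_\diamond \leq \sum_j \alpha_j^2 \leq 2\norm{\mathcal{L}}_{\mathrm{be}}$ using the block-encoding guarantee, and integrate the constant over the ordered simplex of volume $t^{K+1}/(K+1)!$. The only difference is that you spell out details the paper leaves implicit (the inductive verification of the expansion, the contractivity $\norm{e^{Jt}}\le 1$ via $J+J^\dag\preceq 0$, and the explicit treatment of the trailing channel $e^{\mathcal{L}s_1}$), which is fine.
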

The proof of \cref{lemma:gk} is postponed to \cref{sec:proof-higherorder}.

Eventually, we need to approximate the Kraus operator $e^{Jt}$ in our quantum algorithm. This can be done by a truncated Taylor series. For notational simplicity, we define
\begin{align}\label{eq: JJK}
  \quad \mathcal{J}_{K'} = \mathcal{K}\left[\sum_{\ell=0}^{K'} \frac{J^\ell t^\ell}{\ell!} \right].
\end{align}
We quantify the error of this approximation in the following lemma.

\begin{restatable}{lemma}{jktaylor}
  \label{lem:taylor}
  Suppose that $k \in \mathbb{N}$ such that $(k+1)! \geq 2\norm{J}^{k+1}t^{k+1}$. 
Let $\mathcal{J}_k$ be defined in \cref{eq: JJK}. 
  It holds that
  \begin{align}
    \norm{\mathcal{K}[e^{Jt}] - \mathcal{J}_k(t)}_{\diamond} \leq \frac{8e^{\norm{\mathcal{L}}_\mathrm{be}t}\norm{\mathcal{L}}_{\mathrm{be}}^{k+1}t^{k+1}}{(k+1)!}.
  \end{align}
\end{restatable}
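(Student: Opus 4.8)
I would treat both channels as conjugation (Kraus) maps and reduce the diamond-norm distance to an operator-norm distance. Write $A \coloneqq e^{Jt}$ and $B \coloneqq \sum_{\ell=0}^k J^\ell t^\ell/\ell!$, so that $\mathcal{K}[e^{Jt}] - \mathcal{J}_k(t) = \mathcal{K}[A] - \mathcal{K}[B]$. First I would record the elementary fact that for any operators $X,Y$ the (not necessarily completely positive) map $\rho \mapsto X\rho Y$ has diamond norm at most $\norm{X}\,\norm{Y}$: tensoring with the identity and using the Hölder-type bound $\norm{MWN}_1 \leq \norm{M}\,\norm{W}_1\,\norm{N}$ gives this at once. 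The telescoping identity $A\rho A^{\dag} - B\rho B^{\dag} = A\rho(A-B)^{\dag} + (A-B)\rho B^{\dag}$ then yields
\begin{align}
  \norm{\mathcal{K}[A] - \mathcal{K}[B]}_{\diamond} \leq \norm{A-B}\left(\norm{A} + \norm{B}\right),
\end{align}
which isolates the three scalar quantities I must bound.

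Next I would bound each factor. Since $J = -iH - \tfrac12\sum_j L_j^{\dag}L_j$, the triangle inequality together with the block-encoding norm bounds $\norm{H}\leq\alpha_0$ and $\norm{L_j}\leq\alpha_j$ gives $\norm{J} \leq \alpha_0 + \tfrac12\sum_j\alpha_j^2 = \norm{\mathcal{L}}_{\mathrm{be}}$. Consequently $\norm{A} \leq e^{\norm{J}t} \leq e^{\norm{\mathcal{L}}_{\mathrm{be}}t}$, and likewise $\norm{B} \leq \sum_{\ell=0}^k \norm{J}^\ell t^\ell/\ell! \leq e^{\norm{\mathcal{L}}_{\mathrm{be}}t}$, so that $\norm{A}+\norm{B}\leq 2e^{\norm{\mathcal{L}}_{\mathrm{be}}t}$. (One may sharpen this to $\norm{A}\leq 1$ by observing that $J+J^{\dag} = -\sum_j L_j^{\dag}L_j \preceq 0$, whence $e^{Jt}$ is a contraction for $t\geq 0$; this is not needed for the stated constant but is worth noting.)

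For the remaining factor I would bound the Taylor tail by twice its leading term. We have $\norm{A-B} \leq \sum_{\ell=k+1}^\infty \norm{J}^\ell t^\ell/\ell!$, and writing $x\coloneqq\norm{J}t$ this tail equals $\tfrac{x^{k+1}}{(k+1)!}\sum_{j\geq0}\prod_{i=1}^j \tfrac{x}{k+1+i}$, in which every ratio is at most $x/(k+2)$. The hypothesis $(k+1)!\geq 2\norm{J}^{k+1}t^{k+1}$ ensures $x/(k+2)\leq 1/2$, so the geometric majorant sums to at most $2$ and $\norm{A-B}\leq \tfrac{2x^{k+1}}{(k+1)!}\leq \tfrac{2\norm{\mathcal{L}}_{\mathrm{be}}^{k+1}t^{k+1}}{(k+1)!}$. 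Combining the three bounds gives diamond-norm error at most $\tfrac{4e^{\norm{\mathcal{L}}_{\mathrm{be}}t}\norm{\mathcal{L}}_{\mathrm{be}}^{k+1}t^{k+1}}{(k+1)!}$, comfortably below the claimed constant $8$. The main obstacle is precisely this last step: verifying that the factorial hypothesis on the leading term genuinely controls the consecutive ratio $x/(k+2)\leq 1/2$ — a short Stirling-type estimate, since $x^{k+1}/(k+1)!\leq 1/2$ forces $x \lesssim (k+1)/e$ — together with being careful that the diamond-norm reduction is applied to the non-CP \emph{difference} $\mathcal{K}[A]-\mathcal{K}[B]$ rather than to either channel in isolation.
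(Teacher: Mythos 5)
Your proof is correct and follows essentially the same route as the paper's: both reduce the diamond-norm distance between the two conjugation maps to operator-norm bounds on the Taylor tail $\norm{A-B}$ via a telescoping decomposition, invoke $\norm{J}\le\norm{\mathcal{L}}_{\mathrm{be}}$, and use the factorial hypothesis to control the remainder. Your asymmetric telescope $A\rho(A-B)^{\dag}+(A-B)\rho B^{\dag}$ and geometric-series tail estimate are minor variants of the paper's $B+(A-B)$ expansion (which produces the extra $\norm{A-B}^2$ term) and its Lagrange-remainder bound, and in fact yield the slightly better constant $4$ in place of $8$.
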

The proof of \cref{lem:taylor} is postponed to \cref{sec:proof-higherorder}.

We also provide the following useful lemma, which will be used in the final analysis of our algorithm
\begin{restatable}{lemma}{diamondnorm}
  \label{lemma:jk-fk}
  Suppose that $k,K' \in \mathbb{N}$ such that $(K'+1)! \geq 8e^{\norm{\mathcal{L}}_{\mathrm{be}}t}\norm{\mathcal{L}}_{\mathrm{be}}^{K'+1}t^{K'+1}$. Let $\mathcal{J}_{K'}$ be defined in  \cref{eq: JJK}, and $\mathcal{F}_k$ be defined in \cref{eq:fk}. 
  It holds that
  \begin{align}
    \begin{aligned}
      \|\mathcal{J}_{K'}(t-s_m) &\mathcal{L}_\mathrm{J} \mathcal{J}_{K'}(s_m-s_{m-1}) \mathcal{L}_\mathrm{J} \cdots \mathcal{J}_{K'}(s_2-s_1) \mathcal{L}_\mathrm{J} \mathcal{J}_{K'}(s_1)  -\mathcal{F}_{k}(s_k, \ldots, s_1)\|_{\diamond} \\
                                  &\leq \frac{8e^{\norm{\mathcal{L}}_{\mathrm{be}}t}\norm{\mathcal{L}}_{\mathrm{be}}^{K'+1}}{(K'+1)!}(2\norm{\mathcal{L}}_{\mathrm{be}})^k2^kt^{K'+1}. 
    \end{aligned}
  \end{align}
\end{restatable}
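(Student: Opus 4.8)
The plan is to treat both sides as length-$k$ products in which drift propagators alternate with the jump superoperator $\mathcal{L}_\mathrm{J}$, and to compare them factor by factor by a telescoping (hybrid) argument: I replace one drift factor at a time, so that each resulting error term contains exactly one \emph{difference} factor $\mathcal{J}_{K'}(\tau)-\mathcal{K}[e^{J\tau}]$ while every other factor is left intact. Writing the approximate product as $\tilde D_k\,\mathcal{L}_\mathrm{J}\,\tilde D_{k-1}\cdots\mathcal{L}_\mathrm{J}\,\tilde D_0$ with $\tilde D_i\coloneqq\mathcal{J}_{K'}(\tau_i)$, and $\mathcal{F}_k$ as the same expression with $D_i\coloneqq\mathcal{K}[e^{J\tau_i}]$ (here $\tau_0=s_1$, $\tau_i=s_{i+1}-s_i$, $\tau_k=t-s_k$, all $\le t$), the standard telescoping identity produces a sum of $k+1$ hybrid terms, and submultiplicativity of the diamond norm reduces everything to bounding the constituent superoperators.

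The quantitative inputs are three diamond-norm estimates. First, since $\mathcal{L}_\mathrm{J}(\rho)=\sum_j L_j\rho L_j^\dagger$ is completely positive, its diamond norm equals $\norm{\sum_j L_j^\dagger L_j}\le\sum_j\alpha_j^2\le 2\norm{\mathcal{L}}_{\mathrm{be}}$, so the $k$ surviving jump factors contribute the factor $(2\norm{\mathcal{L}}_{\mathrm{be}})^k$. Second, because $J+J^\dagger=-\sum_j L_j^\dagger L_j\preceq 0$, the operator $e^{J\tau}$ is a contraction for $\tau\ge 0$, whence $\norm{\mathcal{K}[e^{J\tau}]}_\diamond=\norm{e^{J\tau}}^2\le 1$. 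Third, \cref{lem:taylor} applied at each time $\tau\le t$ (using that its right-hand side is monotone in the time argument and that $\norm{J}\le\norm{\mathcal{L}}_{\mathrm{be}}$) bounds every difference factor uniformly by $\frac{8e^{\norm{\mathcal{L}}_{\mathrm{be}}t}\norm{\mathcal{L}}_{\mathrm{be}}^{K'+1}t^{K'+1}}{(K'+1)!}$; combined with the contraction bound and the triangle inequality, the factorial hypothesis also forces $\norm{\mathcal{J}_{K'}(\tau)}_\diamond\le 1+(\text{that error})$, a small constant.

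Finally I would assemble the estimate. Each of the $k+1$ hybrid terms is, by submultiplicativity, at most (one difference factor from \cref{lem:taylor}) times (the $k$ surviving jump factors, giving $(2\norm{\mathcal{L}}_{\mathrm{be}})^k$) times (the $k$ surviving drift factors, each a contraction or near-contraction). Summing the $k+1$ terms and using $k+1\le 2^k$ to absorb the count together with the constant kept-factor bounds into the single power $2^k$ yields exactly the claimed inequality. The main obstacle is the bookkeeping in this last step: one must organize the telescope so that the truncation error enters \emph{linearly} — a crude product bound of the form $\prod(1+\delta)$ would produce the exponentially worse $(1+\delta)^{k+1}-1$ — and one must control $\norm{\mathcal{J}_{K'}(\tau)}_\diamond$, since the truncated exponential of a dissipative generator is not automatically a contraction, and it is precisely the factorial accuracy hypothesis that keeps this norm close enough to $1$ for the surviving drift factors to be swept into the constant $2^k$.
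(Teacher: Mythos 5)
Your overall strategy is the same as the paper's: a hybrid (telescoping) replacement of one drift factor at a time, combined with the three estimates $\norm{\mathcal{L}_\mathrm{J}}_\diamond \le 2\norm{\mathcal{L}}_{\mathrm{be}}$, $\norm{\mathcal{K}[e^{J\tau}]}_\diamond \le 1$, and (via \cref{lem:taylor} together with the factorial hypothesis) $\norm{\mathcal{J}_{K'}(\tau)}_\diamond \le 2$. The gap is in your final assembly. Because you bound every difference factor \emph{uniformly} by $\frac{8e^{\norm{\mathcal{L}}_{\mathrm{be}}t}\norm{\mathcal{L}}_{\mathrm{be}}^{K'+1}t^{K'+1}}{(K'+1)!}$, the hybrid term with the difference in position $j$ carries a factor $2^{j}$ from the kept $\mathcal{J}_{K'}$ factors (which, as you note, are only near-contractions with norm bound $2$, not $1$), and summing over the $k+1$ terms produces $\sum_{j=0}^{k}2^{j} = 2^{k+1}-1$, roughly \emph{twice} the claimed $2^k$. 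Your proposed repair --- ``use $k+1\le 2^k$ to absorb the count together with the kept-factor bounds into the single power $2^k$'' --- is arithmetically inconsistent: the kept-factor norms alone already sum to more than $2^k$, and multiplying by the count $k+1$ only makes things worse (it would give $(k+1)2^k \le 4^k$, not $2^k$). So, carried out rigorously, your argument proves the lemma only with $2^{k+1}$ in place of $2^k$; it does not yield ``exactly the claimed inequality.''

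The missing idea --- and what the paper actually does --- is to \emph{not} uniformize the truncation errors. Keep the $j$-th difference bounded by $\frac{8e^{\norm{\mathcal{L}}_{\mathrm{be}}t}\norm{\mathcal{L}}_{\mathrm{be}}^{K'+1}(s_{j+1}-s_j)^{K'+1}}{(K'+1)!}$, relaxing only the exponential prefactor to the full time $t$ but retaining the power of the sub-interval length. Then, with the conventions $s_0=0$, $s_{k+1}=t$, and $p = K'+1 \ge 1$,
\begin{align}
\sum_{j=0}^{k} 2^{j}\,(s_{j+1}-s_j)^{p} \;\le\; 2^{k}\sum_{j=0}^{k}(s_{j+1}-s_j)^{p} \;\le\; 2^{k}\left(\sum_{j=0}^{k}(s_{j+1}-s_j)\right)^{p} \;=\; 2^{k}\,t^{p},
\end{align}
so the factor $2^k$ is paid only once, with no additional multiplicity from the number of hybrid terms; this yields the stated constant. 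Note this is a constant-factor issue only --- your weaker bound would still suffice for the complexity analysis in \cref{thm:main} --- but as written your proof does not establish the inequality in the form stated in \cref{lemma:jk-fk}.
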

The proof of \cref{lemma:jk-fk} is postponed to \cref{sec:proof-higherorder}.

\section{Approximating multiple integrals using scaled Gaussian quadrature}
\label{sec:quadrature}
To obtain an algorithm that can be directly implemented, we apply Gaussian quadrature formulas to approximate the multiple integrals in \cref{eq: GK}. Due to their optimal accuracy, the number of terms in the approximation is significantly compressed. Typically, quadrature error depends on the smoothness of the function. For this purpose, 
we first bound the derivatives of $\mathcal{F}_{k}$.
\begin{restatable}{lemma}{derivativebound}
  \label{lemma:derivative}
  For all $k' \in [k]$, it holds that
  \begin{align}
    \norm{\frac{\dd^{k'}}{\dd s_j^{k'}}\mathcal{F}_{k}}_{\diamond} \leq 2^{2k'+k}\norm{\mathcal{L}}_{\mathrm{be}}^k\norm{J}^{k'}.
  \end{align}
\end{restatable}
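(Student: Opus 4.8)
The plan is to differentiate $\mathcal{F}_{k}$ directly, exploiting the fact that each integration variable enters the product in only two places. With the conventions $s_{k+1}\coloneqq t$ and $s_0\coloneqq 0$, the variable $s_j$ appears solely in the two exponential factors $\mathcal{K}[e^{J(s_{j+1}-s_j)}]$ and $\mathcal{K}[e^{J(s_j-s_{j-1})}]$, which are separated by a single factor $\mathcal{L}_\mathrm{J}$ that does not depend on $s_j$. The essential differentiation rule is that, since $\frac{\dd}{\dd\tau}e^{J\tau}=Je^{J\tau}=e^{J\tau}J$, the product rule for the Kraus map gives $\frac{\dd}{\dd\tau}\mathcal{K}[e^{J\tau}]=\mathcal{L}_\mathrm{D}\mathcal{K}[e^{J\tau}]=\mathcal{K}[e^{J\tau}]\mathcal{L}_\mathrm{D}$; thus every $\tau$-derivative merely inserts one copy of $\mathcal{L}_\mathrm{D}$. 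Because $s_j$ enters the two relevant factors with opposite signs, their derivatives produce $-\mathcal{L}_\mathrm{D}$ and $+\mathcal{L}_\mathrm{D}$, respectively.

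Next I would apply the Leibniz rule to these two $s_j$-dependent factors, writing $A$ and $B$ for the ($s_j$-independent) sub-products to their left and right. This yields
\[
  \frac{\dd^{k'}}{\dd s_j^{k'}}\mathcal{F}_{k} = \sum_{a+b=k'}\binom{k'}{a}(-1)^a\, A\,\big(\mathcal{L}_\mathrm{D}^a\mathcal{K}[e^{J(s_{j+1}-s_j)}]\big)\,\mathcal{L}_\mathrm{J}\,\big(\mathcal{K}[e^{J(s_j-s_{j-1})}]\mathcal{L}_\mathrm{D}^b\big)\,B,
\]
so that each summand is exactly the original product with $a$ extra copies of $\mathcal{L}_\mathrm{D}$ inserted at one location and $b$ copies at another. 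The boundary indices $j=1$ and $j=k$ have identical structure once the conventions $s_0=0$, $s_{k+1}=t$ are used.

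The final step is to estimate each summand by submultiplicativity of the diamond norm, for which I need three building-block bounds. First, $e^{J\tau}$ is a contraction for $\tau\ge 0$: since $J+J^{\dag}=-\sum_j L_j^{\dag}L_j\preceq 0$, a standard energy estimate shows $\norm{e^{J\tau}}\le 1$, and hence $\norm{\mathcal{K}[e^{J\tau}]}_{\diamond}\le\norm{e^{J\tau}}^2\le 1$. Second, $\mathcal{L}_\mathrm{J}$ is completely positive, so $\norm{\mathcal{L}_\mathrm{J}}_{\diamond}=\norm{\sum_j L_j^{\dag}L_j}\le\sum_j\alpha_j^2\le 2\norm{\mathcal{L}}_{\mathrm{be}}$. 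Third, $\norm{\mathcal{L}_\mathrm{D}}_{\diamond}\le 2\norm{J}$, as $\mathcal{L}_\mathrm{D}$ is the sum of left multiplication by $J$ and right multiplication by $J^{\dag}$. Consequently a single copy of $\mathcal{F}_{k}$ has diamond norm at most $(2\norm{\mathcal{L}}_{\mathrm{be}})^k$, each inserted $\mathcal{L}_\mathrm{D}$ contributes a factor $2\norm{J}$, and the binomial coefficients sum to $\sum_{a+b=k'}\binom{k'}{a}=2^{k'}$. Collecting the powers of two gives $2^{k'}\cdot 2^{k}\cdot(2\norm{J})^{k'}=2^{2k'+k}\norm{\mathcal{L}}_{\mathrm{be}}^{k}\norm{J}^{k'}$, which is the claimed bound.

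I expect the main obstacle to be the contraction estimate $\norm{e^{J\tau}}\le 1$: it is precisely this bound (rather than a crude $e^{O(\norm{J}\tau)}$ estimate) that keeps spurious exponential factors out and yields the clean polynomial bound, so its justification via the negative-semidefiniteness of $J+J^{\dag}$ must be made explicit. The remainder is careful bookkeeping, the two points requiring attention being the confirmation that $s_j$ genuinely occurs in exactly two factors and that the completely positive map $\mathcal{L}_\mathrm{J}$ is bounded by $\norm{\sum_j L_j^{\dag}L_j}$ rather than a looser quantity.
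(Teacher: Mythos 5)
Your proposal is correct and follows essentially the same route as the paper's proof: apply the Leibniz rule to the two factors $\mathcal{K}[e^{J(s_{j+1}-s_j)}]$ and $\mathcal{K}[e^{J(s_j-s_{j-1})}]$ in which $s_j$ appears, use the contraction property $\norm{e^{J\tau}}\leq 1$ together with $\norm{\mathcal{L}_\mathrm{J}}_{\diamond}\leq 2\norm{\mathcal{L}}_{\mathrm{be}}$, and collect the factors of two. The only cosmetic difference is that you bound each Kraus-map derivative via the identity $\frac{\dd}{\dd\tau}\mathcal{K}[e^{J\tau}]=\mathcal{L}_\mathrm{D}\mathcal{K}[e^{J\tau}]$ and $\norm{\mathcal{L}_\mathrm{D}}_{\diamond}\leq 2\norm{J}$, whereas the paper applies the product rule once more at the level of the Kraus operators, giving the identical per-derivative factor $2^{p}\norm{J}^{p}$.
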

The proof of \cref{lemma:derivative} is postponed to \cref{sec:proof-quadrature}.

We now discuss the quadrature approximation for the integral in \cref{eq: GK}:
\begin{align}\label{multi-int}
  \int_{0 \leq s_1 \leq \cdots \leq s_k \leq t}\mathcal{L}_\mathrm{J} \mathcal{K}[e^{J(s_m-s_{m-1})}] \mathcal{L}_\mathrm{J} \cdots \mathcal{K}[e^{J(s_2-s_1)}] \mathcal{L}_\mathrm{J} \mathcal{K}[e^{J(s_1-0)}]\,\dd s_1\cdots \dd s_k.
\end{align}

For optimal accuracy, 
we use Gaussian quadrature.
In the Gaussian quadrature rule, the interpolation nodes $0 \leq \hat{s}_1 \leq \cdots \leq \hat{s}_q \leq t$ and the weights $w_1, \ldots, w_q$ are independent of the function and can be pre-computed. More specifically, let $\{\mathscr{P}_i(x)\}_{i}$ be the standard Legendre polynomials. They are an orthonormal family of polynomials in the sense that
\begin{align}
  \int_{-1}^1\mathscr{P}_r(x)\mathscr{P}_s(x)\,\dd x = 
  \begin{cases}
    0 & \quad r \neq s,\\
    1 & \quad r = s.
  \end{cases}
\end{align}

By a simple scaling, 
\begin{align}
  \hat{\mathscr{P}}(x) \coloneqq \mathscr{P}\left(\frac{2x}{t}-1\right).
\end{align}
we obtain the functions $\hat{\mathscr{P}}$ that are orthogonal for the interval $[0, t]$.
Let $\{\hat{s}_i\}_{i=1}^n$ be the roots of the $n$-th degree polynomial $\hat{\mathscr{P}}_q$. We also define
\begin{align}
  \pi_q(x) \coloneqq (x-\hat{s}_1)(x-\hat{s}_2)\cdots(x-\hat{s}_q).
\end{align}
Then the $i$-th Lagrange polynomial for points $\hat{s}_1,\ldots,\hat{s}_q$ is
\begin{align}
  \mathscr{L}_{q-1,i}(x) = \frac{\pi_q(x)}{(x-\hat{s}_i)\pi_q'(\hat{s}_i)}.
\end{align}
Once the quadrature points are selected, the weight of the Gaussian quadrature can be expressed as
\begin{align}
  w_i = \int_0^t\mathscr{L}_{q-1,i}(x)\,\dd x,
\end{align}
which can be direct deduced from a polynomial interpolation. 

We refer to $\hat{s}_1,\ldots, \hat{s}_q$ as the \emph{canonical quadrature points} and $w_1, \ldots, w_q$ as the \emph{canonical weights}. In  approximating $\int_0^t f(x)\,\dd x$ using $\sum_{j=1}^q f(\hat{s}_j)w_j$, the error follows the standard bound,
\begin{equation}\label{eq: GQ}
    E_q[f]= \int_0^t f(x)\,\dd x -  \sum_{j=1}^q f(\hat{s}_j)w_j =
    \frac{f^{(2q)}(\xi)}{(2q)!}\int_0^t \pi_q(x)^2 \dd x,
\end{equation}
for some $\xi \in [0,t].$

To estimate the integral term in the error, we notice that, 
\begin{equation}
  \pi_q(x)= \frac{t^q q! }{2^q (2q-1)!!} \mathscr{P}_q\left(\frac{2x}{t}-1\right),    
\end{equation}
The coefficient on the right hand side is determined by observing that $\pi_q(x)$ is a monic polynomial, and the leading coefficient of the standard Legendre polynomial is $(2q-1)!!/q!$.
The notation $!!$ indicates a double factorial, i.e., $(2q-1)!!=(2q-1)(2q-3)\cdots 1$ and we use the convention that $(-1)!!=1.$ 

Combining these formulas, we arrive at an explicit bound
\begin{align}\label{eq: gq-err}
   \abs{ E_q [f]}= \frac{\abs{f^{(2q)}}(\xi)t^{2q+1} (q!)^2 }{(2q)! 2^{2q} (2q-1)!! (2q+1)!! } \leq \frac{\abs{f^{(2q)}}(\xi)t^{2q+1} q }{(2q)! 2^{4q-1} }
\end{align}
for some $\xi \in [0, t]$, where the inequality follows from the fact that $q!! = 2^{q/2}(q/2)!$ for even $q$. Here we also used the identity,
\[ \int_{-1}^1 \mathscr{P}_q^2(x) \,\dd x = \frac{2}{2q+1}.
\]

We hold $t$ fixed. For an interval $[0, s_k]$ with $s_k \leq t$, we use a scaled canonical quadrature points and weights: $s_k\hat{s_1}/t, \ldots, s_k\hat{s}_q/t$, and $s_kw_1/t, \ldots, s_kw_q/t$. Then, $\int_0^{s_k}f(x)\,\dd x$ can be approximated by the scaled quadrature points and weights with the same error bound:
\begin{equation}
    \int_0^{s_k}f(x)\,\dd x = \sum_j^n f\left( \frac{s_k \hat{s}_j }{t}\right) \frac{s_k w_j}{t}+ \mathcal{O}\left( \frac{\norm{f^{(2n)}}_\infty s_k^{2n+1} n }{(2n)! 2^{4n-1} } \right).  
\end{equation}

For each $j \in [n]$, define the functions $u_k$ and $v_k$ as
\begin{align}\label{eq: uv-scale}
    u_j(x) &\coloneqq x \hat{s}_j/t \\
    v_j(x) &\coloneqq x w_j/t.
\end{align}
Note that for any $s_\ell$, $\{u_j(s_\ell)\}_{j=1}^q$ are the scaled canonical quadrature point and $\{v_j(s_\ell)\}_{j=1}^q$ are the scaled weights.

To simplify the notation, we extend \cref{eq: uv-scale} and  define
\begin{align}
  \hat{x}_{(j_k)} \coloneqq \hat{s}_{j_k}, \quad &\text{ and } \quad \hat{x}_{(j_{k},\ldots, j_{k-\ell})} \coloneqq u_{j_{k-\ell}}\circ\cdots\circ u_{j_{k-1}} (\hat{s}_{j_k}) \, \text{ for all $1 \leq \ell \leq k-1$},\\
  \hat{w}_{(j_k)} \coloneqq w_{j_k}, \quad &\text{ and } \quad \hat{w}_{(j_k,\ldots, j_{k-\ell})} \coloneqq v_{j_{k-\ell}} (\hat{x}_{(j_k,\ldots,j_{k-\ell+1})}) \, \text{ for all $1 \leq \ell \leq k-1$}.
\end{align}

With these notations, the approximation of the integral in \cref{multi-int} becomes,
\begin{align}
  \sum_{j_1=1}^{q} \cdots \sum_{j_k=1}^{q}\mathcal{F}_{k}\left(\hat{x}_{(j_k)}, , \ldots, \hat{x}_{(j_k, \ldots, j_1)}\right)\hat{w}_{(j_k)}\cdots \hat{w}_{(j_k, \ldots, j_1)}.
\end{align}

 We first show some useful properties of the quadrature weights.
 \begin{restatable}{lemma}{scaledweight}
   \label{lemma:wsl}
  for all $\ell \in \{0, \ldots, q\}$, it holds that
  \begin{align}
    \label{eq:wsl}
    \sum_{j=1}^q w_{j}\hat{s_j}^\ell  = \frac{t^{\ell+1}}{\ell+1}.
  \end{align}
  In particular, when $\ell=0$
  \begin{align}
    \label{eq:sum-w}
    \sum_{j=1}^q w_{j}  = t.
  \end{align}
  For all positive integers $k,\ell$ with $\ell < k$, it also holds that
  \begin{align}
    \label{eq:ww}
    \sum_{j_k=1}^q\cdots\sum_{j_{k-\ell}=1}^q\hat{w}_{(j_k)}\cdots\hat{w}_{(j_k,\ldots,j_{k-\ell})} = \frac{t^{\ell+1}}{(\ell+1)!}.
  \end{align}
  In particular, when $\ell = k-1$, it holds that
  \begin{align}
    \label{eq:wwkto1}
    \sum_{j_k=1}^q\cdots\sum_{j_{1}=1}^q\hat{w}_{(j_k)}\cdots\hat{w}_{(j_k,\ldots,j_{1})} = \frac{t^{k}}{(k+1)!}.
  \end{align}
\end{restatable}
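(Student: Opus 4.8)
The plan is to derive the one-variable moment identities first and then bootstrap them, through the self-similar scaling of the quadrature data, into the nested identities.

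\emph{Step 1 (the moment identities \eqref{eq:wsl} and \eqref{eq:sum-w}).} These are immediate from the defining accuracy of Gaussian quadrature. A $q$-point Gaussian rule on $[0,t]$ integrates every polynomial of degree at most $2q-1$ exactly, and $x^\ell$ has degree $\ell \le q \le 2q-1$; hence $\sum_{j=1}^q w_j \hat s_j^{\,\ell} = \int_0^t x^\ell\,\dd x = t^{\ell+1}/(\ell+1)$, with $\ell=0$ giving $\sum_j w_j = t$. If one prefers not to quote exactness as a black box, the same conclusion drops out of the error formula \eqref{eq: GQ}, whose remainder carries the factor $f^{(2q)}(\xi)$ that vanishes identically when $f$ is a polynomial of degree below $2q$.

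\emph{Step 2 (the nested identities \eqref{eq:ww} and \eqref{eq:wwkto1}).} I would prove these by an induction that peels off the innermost summation index, introducing the auxiliary quantity
\begin{align*}
  A_m^{(p)} \coloneqq \sum_{j_k=1}^q\cdots\sum_{j_{k-m+1}=1}^q \hat w_{(j_k)}\cdots \hat w_{(j_k,\ldots,j_{k-m+1})}\,\big(\hat x_{(j_k,\ldots,j_{k-m+1})}\big)^p,
\end{align*}
so that the left-hand side of \eqref{eq:ww} is exactly $A_{\ell+1}^{(0)}$. The heart of the argument is a one-step reduction: writing $y \coloneqq \hat x_{(j_k,\ldots,j_{k-m+2})}$ and using the scalings $\hat w_{(\ldots,j_{k-m+1})} = v_{j_{k-m+1}}(y)$ and $\hat x_{(\ldots,j_{k-m+1})} = u_{j_{k-m+1}}(y)$ together with \eqref{eq:wsl}, the innermost sum collapses to
\begin{align*}
  \sum_{j_{k-m+1}=1}^q v_{j_{k-m+1}}(y)\,\big(u_{j_{k-m+1}}(y)\big)^p = \frac{y^{p+1}}{t^{p+1}}\sum_{j} w_j \hat s_j^{\,p} = \frac{y^{p+1}}{p+1},
\end{align*}
which yields the recurrence $A_m^{(p)} = \tfrac{1}{p+1}A_{m-1}^{(p+1)}$. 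Telescoping this from $m=\ell+1,\ p=0$ down to the base case $A_1^{(\ell)} = \sum_j w_j \hat s_j^{\,\ell} = t^{\ell+1}/(\ell+1)$ gives $A_{\ell+1}^{(0)} = \tfrac{1}{\ell!}\cdot\tfrac{t^{\ell+1}}{\ell+1} = t^{\ell+1}/(\ell+1)!$, which is \eqref{eq:ww}; specializing to $\ell=k-1$ recovers the $k$-fold sum of \eqref{eq:wwkto1}.

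\emph{Anticipated obstacles and caveats.} The routine-but-delicate part is the index bookkeeping: one must track carefully how the composite point $\hat x_{(\ldots)}$ and weight $\hat w_{(\ldots)}$ behave under one further composition with $u_j,v_j$, and the auxiliary exponent $p$ in $A_m^{(p)}$ is precisely the device that keeps the induction closed (a naive induction on the bare weight product stalls, because the collapsed innermost sum produces a \emph{point} rather than a weight). A genuine hypothesis, as opposed to a cosmetic one, is a degree constraint hidden in the statement: the telescoping invokes \eqref{eq:wsl} up to exponent $\ell$, so the argument requires $\ell \le q$, matching the range in which \eqref{eq:wsl} is asserted and consistent with the algorithm's regime where the number of quadrature points is taken at least as large as the truncation order. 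Finally, my computation gives $A_k^{(0)} = t^k/k!$, the volume of the simplex $\{0 \le s_1 \le \cdots \le s_k \le t\}$ — and already for $k=1$ the left side of \eqref{eq:wwkto1} is $\sum_{j_1} w_{j_1} = t = t^1/1!$ — so I would prove the value $t^k/k!$ and treat the printed $t^k/(k+1)!$ in \eqref{eq:wwkto1} as a typographical slip.
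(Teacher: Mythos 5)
Your proof is correct. Step 1 is identical to the paper's: exactness of $q$-point Gaussian quadrature on polynomials of degree at most $2q-1$ forces $E_q[x^\ell]=0$, and \eqref{eq:wsl}, \eqref{eq:sum-w} follow by substitution. For the nested identity \eqref{eq:ww} your route is organizationally different from the paper's: the paper expands every composite weight explicitly as a product of $w_{j_i}$'s and $\hat{s}_{j_i}$'s over powers of $t$, observes that the $(\ell+1)$-fold sum then factorizes into independent one-dimensional moment sums, and applies \eqref{eq:wsl} to each factor; you instead sum the innermost index first, with the auxiliary exponent $p$ in $A_m^{(p)}$ recording the power of the composite point that each collapsed sum leaves behind, giving the recurrence $A_m^{(p)} = \tfrac{1}{p+1}A_{m-1}^{(p+1)}$. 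The two computations are equivalent (both consume the moment identities for exponents $0,\ldots,\ell$, so both implicitly need $\ell \le 2q-1$; your explicit degree caveat is a hypothesis the paper leaves unstated). Your organization has a mild bookkeeping advantage: it never requires tracking the global power of $t$, which in the paper's displayed computation is in fact misprinted (the denominators appear as $t^{k}, t^{k-1}, \ldots$ where they should be $t^{\ell}, t^{\ell-1}, \ldots$, i.e., $t^{\ell(\ell+1)/2}$ in total), although the paper's final answer \eqref{eq:ww} is correct.

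Your final observation is also correct and worth emphasizing: the printed \eqref{eq:wwkto1}, namely $t^k/(k+1)!$, is inconsistent with \eqref{eq:ww} at $\ell = k-1$, which gives $t^k/k!$; your $k=1$ sanity check ($\sum_j w_j = t \neq t/2$) settles it. The paper's own proof says \eqref{eq:ww} ``immediately implies'' \eqref{eq:wwkto1}, so what the paper actually proves is the corrected value $t^k/k!$, and the printed $(k+1)!$ is a typographical slip, exactly as you diagnose. Be aware that the slip propagates: in the proof of the main theorem the same $k$-fold weight sum is evaluated as $t^k/(k-1)!$, a third value; the correct one remains $t^k/k!$.
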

The proof of \cref{lemma:wsl} is postponed to \cref{sec:proof-quadrature}.

With the bound on the derivatives of the integrand in \cref{lemma:derivative} and the Gaussian quadrature error \cref{eq: gq-err}, we can estimate the overall quadrature error, as stated in the following lemma,
\begin{restatable}{lemma}{quadratureerror}
  \label{lemma:int-to-sum}
  It holds that
\begin{align}
  &\begin{aligned}
  &\left\|\int_{0 \leq s_1 \leq \cdots \leq s_k \leq t}\mathcal{F}_{k}(s_k, \ldots, s_1) \,\dd s_1\cdots \dd s_k \right. \\
  &\left.\quad-\sum_{j_1=1}^{q} \cdots \sum_{j_k=1}^{q}\mathcal{F}_{k}\left(\hat{x}_{(j_k)}, , \ldots, \hat{x}_{(j_k, \ldots, j_1)}\right)\hat{w}_{(j_k)}\cdots \hat{w}_{(j_k, \ldots, j_1)}\right\|_{\diamond} \\
    &\quad = O\left(\frac{(2t)^{k-1}2^{k+1}\norm{\mathcal{L}}_{\mathrm{be}}^k\norm{J}^{(2q)}t^{2q+1}q}{(k-1)!(2q)!}\right).
  \end{aligned}
\end{align}
\end{restatable}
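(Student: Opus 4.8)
The plan is to bound the error of approximating the $k$-fold iterated integral by its nested scaled Gaussian quadrature, layer by layer, using a telescoping decomposition. The core idea is that the nested quadrature applies a one-dimensional Gaussian rule in each variable $s_1, \ldots, s_k$ successively, so the total error can be written as a sum of $k$ terms, where the $r$-th term measures the error incurred by replacing the exact integral in the variable $s_r$ by its quadrature sum, while the outer variables $s_{r+1}, \ldots, s_k$ are already discretized and the inner variables $s_1, \ldots, s_{r-1}$ are still integrated exactly. Each such term is controlled by the single-variable Gaussian quadrature bound \cref{eq: gq-err} applied to $\mathcal{F}_k$ viewed as a function of $s_r$, together with the derivative bound from \cref{lemma:derivative} and the weight-summation identities from \cref{lemma:wsl}.

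First I would set up the telescoping sum. Writing $\mathcal{Q}_r$ for the operation ``apply scaled Gaussian quadrature in variables $s_{r}, \ldots, s_k$ and integrate exactly in $s_1, \ldots, s_{r-1}$,'' the full error is $\sum_{r=1}^k(\mathcal{Q}_{r+1} - \mathcal{Q}_r)$ where $\mathcal{Q}_{k+1}$ is the exact integral and $\mathcal{Q}_1$ is the fully discretized sum. The difference $\mathcal{Q}_{r+1} - \mathcal{Q}_r$ isolates the quadrature error in a single variable $s_r$ over the interval $[0, s_{r+1}]$ (with $s_{k+1} \coloneqq t$), so by \cref{eq: gq-err}, with the scaling to the subinterval, it is bounded in diamond norm by $\dfrac{\|\partial_{s_r}^{2q}\mathcal{F}_k\|_\diamond\, s_{r+1}^{2q+1}\, q}{(2q)!\,2^{4q-1}}$. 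Here I would invoke \cref{lemma:derivative} with $k' = 2q$ to replace $\|\partial_{s_r}^{2q}\mathcal{F}_k\|_\diamond$ by $2^{6q+k}\norm{\mathcal{L}}_{\mathrm{be}}^k\norm{J}^{2q}$, and then integrate/sum the remaining exact integrations and quadrature weights in the other variables. The weight identities \cref{eq:ww} and the volume $\int_{0\le s_1\le\cdots\le s_{r-1}\le s_r}\dd s_1\cdots\dd s_{r-1} = s_r^{r-1}/(r-1)!$ collapse these to a clean power of $t$, yielding a factor like $t^{k-1}/(k-1)!$ after summing over the $k$ layers.

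The main obstacle I anticipate is bookkeeping the constants so that the powers of $2$ and the factorials assemble into exactly the claimed bound $O\!\left(\dfrac{(2t)^{k-1}2^{k+1}\norm{\mathcal{L}}_{\mathrm{be}}^k\norm{J}^{2q}t^{2q+1}q}{(k-1)!(2q)!}\right)$, since the scaled subintervals $[0,s_{r+1}]$ introduce $s_{r+1}$-dependent factors that must be integrated against the remaining measure, and the derivative bound contributes a $2^{O(q)}$ factor that has to be absorbed into the $2^{4q-1}$ denominator from \cref{eq: gq-err}. A subtle point is that the diamond norm is submultiplicative and the superoperators $\mathcal{K}[e^{J\cdot}]$ and $\mathcal{L}_\mathrm{J}$ do not commute, so I must be careful that freezing the outer variables at quadrature nodes while differentiating in $s_r$ only acts on the two exponential factors $\mathcal{K}[e^{J(s_{r+1}-s_r)}]$ and $\mathcal{K}[e^{J(s_r - s_{r-1})}]$ adjacent to $s_r$; the derivative bound in \cref{lemma:derivative} is stated for differentiation in a single $s_j$ and already accounts for this, so the key is to verify that the bound holds uniformly over the frozen outer nodes and the remaining inner integration region, which it does since \cref{lemma:derivative} is a worst-case bound independent of the other arguments.
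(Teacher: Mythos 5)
Your proposal takes essentially the same route as the paper's proof: the paper's ``hybrid'' layer-by-layer argument is exactly your telescoping decomposition, discretizing one variable at a time from the outermost direction inward, bounding each layer by the one-dimensional Gaussian quadrature error \cref{eq: gq-err} combined with the derivative bound of \cref{lemma:derivative}, collapsing the remaining weights and simplex volumes via \cref{lemma:wsl}, and summing the $k$ layer errors by the triangle inequality (the binomial sum $\sum_{\ell}\frac{t^{k-1}}{(k-\ell-2)!(\ell+1)!}$ producing the $(2t)^{k-1}/(k-1)!$ factor). One arithmetic slip to fix: \cref{lemma:derivative} with $k'=2q$ gives $2^{2(2q)+k}=2^{4q+k}\norm{\mathcal{L}}_{\mathrm{be}}^k\norm{J}^{2q}$, not $2^{6q+k}\norm{\mathcal{L}}_{\mathrm{be}}^k\norm{J}^{2q}$; with the correct exponent, dividing by the $2^{4q-1}$ from \cref{eq: gq-err} leaves exactly the $2^{k+1}$ appearing in the claimed bound, whereas your $2^{6q+k}$ would leave an uncancelled $4^{q}$ factor, precisely the absorption issue your last paragraph anticipates.
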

The proof of \cref{lemma:int-to-sum} is postponed to \cref{sec:proof-quadrature}.

\section{Quantum algorithm and the proof of the main theorem}
\label{sec:alg}

In this section, we prove the main theorem and describe the algorithm in the proof. Our algorithm constructs a segment for constant evolution time, i.e., $t\norm{\mathcal{L}}_{\mathrm{be}} = \Theta(1)$. For arbitrary evolution time, we just repeat the simulation segment $O(t\norm{\mathcal{L}}_{\mathrm{be}})$ times with a scaled precision parameter.

We first present the higher order approximation of $e^{\mathcal{L}}$ as a completely positive map with explicit Kraus operators. Then we use \cref{lemma:block-encoding-channel} to implement this completely positive map with success probability $1/4$, which can be calculated by analyzing the normalizing constants of the block-encodings of the Kraus operators. Then we show that the special state $\ket{\mu}$ required by \cref{lemma:block-encoding-channel} can be efficiently prepared. Finally, we analyze the error introduced by using a truncated Taylor series to approximate $e^{Jt}$, which is part of the Kraus operators.

\begin{theorem}
  \label{thm:main}
  Suppose we are given an $(\alpha_0, a, \epsilon')$-block-encoding $U_H$ of $H$, and an $(\alpha_j, a, \epsilon')$-block-encoding $U_{L_j}$ for each $L_j$. For all $t,\epsilon \geq 0$ with $\epsilon' \leq \epsilon/(t(\norm{\mathcal{L}}_{\mathrm{be}})$, there exists a quantum algorithm for simulating $e^{\mathcal{L}t}$ using
  \begin{align}
    O\left(t\norm{\mathcal{L}}_{\mathrm{be}}\,\frac{\log(t\norm{\mathcal{L}}_{\mathrm{be}}/\epsilon)}{\log\log(t\norm{\mathcal{L}}_{\mathrm{be}}/\epsilon)}\right)
  \end{align}
  queries to $U_H$ and $U_{L_j}$ and 
  \begin{align}
    O\left(mt\norm{\mathcal{L}}_{\mathrm{be}}\,\left(\frac{\log(t\norm{\mathcal{L}}_{\mathrm{be}}/\epsilon)}{\log\log(t\norm{\mathcal{L}}_{\mathrm{be}}/\epsilon)}\right)^2\right)
  \end{align}
  additional 1- and 2-qubit gates.
\end{theorem}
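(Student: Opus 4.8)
The plan is to reduce the whole evolution to a single segment of constant length $t_0$ with $t_0\norm{\mathcal{L}}_{\mathrm{be}} = \Theta(1)$, since $e^{\mathcal{L}t} = (e^{\mathcal{L}t_0})^r$ with $r = O(t\norm{\mathcal{L}}_{\mathrm{be}})$, and the diamond norm is subadditive under composition of channels, so simulating each segment to precision $\epsilon/r$ gives total error $\epsilon$. For one segment I would assemble an explicit completely positive approximant by chaining the three approximation results already established: replace $e^{\mathcal{L}t_0}$ by the truncated Duhamel series $\mathcal{G}_K(t_0)$ of \cref{eq: GK} (error controlled by \cref{lemma:gk}), replace every drift propagator $\mathcal{K}[e^{J\cdot}]$ inside each $\mathcal{F}_k$ by its truncated Taylor channel $\mathcal{J}_{K'}$ (errors controlled by \cref{lem:taylor} and \cref{lemma:jk-fk}), and replace each $k$-fold simplex integral by its scaled Gaussian-quadrature sum (error controlled by \cref{lemma:int-to-sum}). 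A triangle-inequality bookkeeping, using subadditivity of $\norm{\cdot}_\diamond$ over the sum on $k$ and submultiplicativity across the composed layers, collapses the total error into the sum of these three bounds plus the block-encoding error $\epsilon'$ propagated through the construction, all of which are tuned by the truncation orders $K,K',q$.

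The resulting map is already in Kraus form: each Kraus operator is a scalar $\sqrt{\hat{w}_{(j_k)}\cdots\hat{w}_{(j_k,\ldots,j_1)}}$ times an alternating product of Taylor polynomials $P(\tau) \coloneqq \sum_{\ell=0}^{K'} J^\ell\tau^\ell/\ell!$ and jump operators $L_{\ell_i}$. I would block-encode these factors bottom-up: first build a block-encoding of $J=-iH_{\mathrm{eff}}$ from $U_H$ and the $U_{L_j}$ via \cref{lemma:sum-to-be}, whose normalizing factor is exactly $\norm{\mathcal{L}}_{\mathrm{be}}$; then form $P(\tau)$ by a second application of \cref{lemma:sum-to-be} over the powers $J^0,\ldots,J^{K'}$, with normalizing factor $\sum_{\ell}(\norm{\mathcal{L}}_{\mathrm{be}}\tau)^\ell/\ell! \le e^{\norm{\mathcal{L}}_{\mathrm{be}}\tau}$; and finally concatenate the $P$'s and the $U_{L_j}$'s into block-encodings of the full products, folding the quadrature weights into the LCU coefficient state. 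Feeding these Kraus operators into \cref{lemma:block-encoding-channel} implements the CP map with success amplitude $1/\sqrt{\sum_j s_j^2}$.

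The step I expect to be the main obstacle is pinning down this normalization $\sum_j s_j^2$ precisely enough to apply oblivious amplitude amplification in a single round. Using that the drift-time increments telescope to $t_0$ (so the product of $P$-factors is at most $e^{\norm{\mathcal{L}}_{\mathrm{be}}t_0}$), that the jump factors contribute $\sum_j\alpha_j^2 \le 2\norm{\mathcal{L}}_{\mathrm{be}}$ per layer, and the weight identity \cref{eq:wwkto1} of \cref{lemma:wsl}, one shows $\sum_j s_j^2 = \Theta(1)$ for a constant segment. To make \cref{lemma:aa} apply verbatim I would pad the Kraus list with a single zero operator carrying normalizing factor $\sqrt{4-\sum_j s_j^2}$, choosing $t_0$ small enough that $\sum_j s_j^2<4$, thereby forcing the total to be exactly $4$ and the good amplitude to be exactly $1/2$ on a trace-preserving map. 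The genuine difficulty is that the implemented map is only \emph{approximately} trace-preserving: being $\epsilon/r$-close to the channel $e^{\mathcal{L}t_0}$ in diamond norm, it satisfies $\norm{\sum_j A_j^\dag A_j - I}=O(\epsilon/r)$ rather than $=0$, so the good amplitude is $\tfrac12(1\pm\delta)$. I would therefore run \cref{lemma:aa} as a robust amplifier, bound the extra output deviation by $O(\delta)$, and absorb it into the segment error budget.

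Finally I would choose $K,K',q = \Theta\!\big(\log(1/\epsilon)/\log\log(1/\epsilon)\big)$ so each truncation error falls below $\epsilon/r$ on a constant-length segment, and then count resources: the worst-case (level-$K$) Kraus circuit makes $O(K')$ oracle calls per drift layer and $O(1)$ per jump layer across $O(K)$ layers, while the LCU coefficient-state preparations and the select over the $m$ jump operators supply the additional logarithmic and linear-in-$m$ gate overhead. Summing over the $r=O(t\norm{\mathcal{L}}_{\mathrm{be}})$ segments yields the stated complexities. The point requiring the most care in this last step is the precise way the truncation orders $K$, $K'$, and $q$ combine, which is what distinguishes the single logarithmic factor appearing in the query count from the squared factor appearing in the gate count; I would track these exponents explicitly rather than bound them crudely.
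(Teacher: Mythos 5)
Your proposal follows the paper's proof essentially step for step: the same segmentation into blocks of length $\Theta(1/\norm{\mathcal{L}}_{\mathrm{be}})$, the same three-stage approximation (truncated Duhamel series via \cref{lemma:gk}, scaled Gaussian quadrature via \cref{lemma:int-to-sum}, truncated Taylor channels via \cref{lem:taylor} and \cref{lemma:jk-fk}), and the same implementation route through \cref{lemma:sum-to-be}, \cref{lemma:block-encoding-channel}, and \cref{lemma:aa}. Your two deviations are minor and defensible: padding the Kraus list with a null operator to hit the success amplitude exactly is the same dilution trick the paper reserves for the final short segment (the paper instead solves for $t$ so that the sum of squared normalizing constants hits a fixed constant), and your concern that the implemented map is only approximately trace-preserving---so that oblivious amplitude amplification must be run as a robust amplifier with an $O(\delta)$ correction---is a genuine technicality that the paper silently absorbs.

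The one place your write-up does not close the argument is exactly the place you flag: the query count. By your own tally, a level-$K$ Kraus circuit uses $O(K')$ oracle calls per drift layer across $O(K)$ layers, i.e.\ $O(KK')$ queries per segment, which with $K,K' = \Theta(\log(1/\epsilon)/\log\log(1/\epsilon))$ gives the \emph{squared} logarithmic factor, not the single factor claimed for the query complexity in the theorem. Saying you will ``track the exponents explicitly'' does not resolve this; some structural change is needed, for instance truncating the \emph{total} Taylor order summed across all drift propagators of a single Kraus operator (so that each Kraus circuit contains $O(K+K')$ oracle layers, with the extra bookkeeping pushed into the LCU coefficient preparation), rather than truncating each $e^{J(s_{i+1}-s_i)}$ at order $K'$ separately. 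Note that the paper is vulnerable on the same point: its proof counts ``at most $K$ controlled-$U_s$ and at most $K$ controlled-$U_{L_j}$'' and only tallies the gate cost $O(KK'm)$, so its single-log query claim implicitly treats each $U_s$ as one query even though $U_s$ is synthesized from $O(K')$ calls to $U_H$ and the $U_{L_j}$. Your counting is the more honest of the two; as written, however, it establishes a slightly weaker query bound than the statement asserts.
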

\begin{proof}
  We describe our simulation algorithm and prove the theorem as follows.

  \textbf{Completely-positive approximation.}
  The final approximation to $e^{\mathcal{L}t}$ is
  \begin{align}
    \label{eq:final-approx}
    \mathcal{K}[e^{Jt}] +\sum_{k=1}^K \sum_{j_1=1}^{q} \cdots \sum_{j_k=1}^{q}\mathcal{F}_{K}\left(\hat{x}_{(j_k)}, , \ldots, \hat{x}_{(j_k, \ldots, j_1)}\right)\hat{w}_{(j_k)}\cdots \hat{w}_{(j_k, \ldots, j_1)},
  \end{align}
  which is a completely positive map and the block-encodings of its Kraus operators can be easily obtained by a product of the block-encoding $U_H$ of $H$, and the block-encodings $U_{L_j}$ of $L_j$ as well as positive factors determined by Taylor's expansion and the Gaussian quadrature weights. More specifically, define the index sets $\mathscr{I}$ as
  \begin{align}
    \mathscr{I} &\coloneqq \left\{k,\ell_1, \ldots, \ell_{k}, j_1, \ldots, j_k : k \in [K], \ell_1,\ldots,\ell_k \in [m], j_1\ldots, j_k \in [q]\right\}.
  \end{align}
  The completely positive map in \cref{eq:final-approx} can be written as
  \begin{align}
    A_0\rho A_0^{\dag} + \sum_{j\in \mathscr{I}}A_{j}\rho A_{j}^{\dag},
  \end{align}
  with $A_0 \coloneqq e^{Jt}$,
  and
  \begin{align}
    \label{eq:aj}
    A_j = \sqrt{\hat{w}_{(j_k)}\cdots\hat{w}_{(j_k,\ldots,j_1)}} e^{J(t-\hat{x}_{(j_k)})}L_{\ell_{k}}\cdots e^{J(\hat{x}_{(j_k,\ldots,j_{2})}-\hat{x}_{(j_k,\ldots,j_{1})})} L_{\ell_1}e^{J(\hat{x}_{(j_k,\ldots,j_{1})})},
  \end{align}
  for $j = (k, \ell_1,\ldots, \ell_{k-1},j_1,\ldots,j_k) \in \mathscr{I}$.

  \textbf{Setting parameters for 1/4 success probability.}
  We use \cref{lemma:block-encoding-channel} to implement the above map, and the success probability is determined by the sum-of-squares of the normalizing constants of the Kraus operators.
  
  We first consider the Kraus operators of
  \begin{align}
     \mathcal{F}_{k}(s_k, \ldots, s_1) =
     \mathcal{K}[e^{J(t-s_k)} \mathcal{L}_\mathrm{J} \mathcal{K}[e^{J(s_k-s_{k-1})}] \mathcal{L}_\mathrm{J} \cdots \mathcal{K}[e^{J(s_2-s_1)}] \mathcal{L}_\mathrm{J} \mathcal{K}[e^{J(s_1-0)}],
  \end{align}
  for any $s_1\leq\cdots\leq s_k$. For each $\mathcal{K}[e^{Js}]$, we use \cref{lemma:sum-to-be} to approximate its block-encoding as a truncated Taylor series. For the convenience of analysis, let us for now assume an infinite Taylor series is implemented. The normalizing constant of the block-encoding for $e^{Js}$ is then
  \begin{align}
    \sum_{\ell=0}^{\infty}\frac{s^{\ell}(\alpha_0 + \frac{1}{2}\sum_{j=1}^m\alpha_j^2)}{\ell!} = e^{s\norm{\mathcal{L}}_{\mathrm{be}}}.
  \end{align}
  As a result, the sum-of-squares of the normalizing constants of the Kraus operators of $\mathcal{F}_k(s_k, \ldots, s_1)$ is
  \begin{align}
    \sum_{j_1,\ldots,j_k=0}^m e^{2\norm{L}_{\mathrm{be}}(t-s_k)} e^{2\norm{L}_{\mathrm{be}}(s_k-s_{k-1})}\cdots e^{2\norm{L}_{\mathrm{be}}(s_1-0)} \alpha_{j_1}^2\cdots\alpha_{j_k}^2= e^{2\norm{L}_{\mathrm{be}}t} \left(\sum_{j=1}^m\alpha_j^2\right)^k
  \end{align}
  For the approximation of the integral, 
  \begin{align}
    \sum_{j_1=1}^{q} \cdots \sum_{j_k=1}^{q}\mathcal{F}_{k}\left(\hat{x}_{(j_k)}, , \ldots, \hat{x}_{(j_k, \ldots, j_1)}\right)\hat{w}_{(j_k)}\cdots \hat{w}_{(j_k, \ldots, j_1)},
  \end{align}
  the sum-of-squares of the normalizing constants of its Kraus operators is
  \begin{align}
    e^{2\norm{\mathcal{L}}_{\mathrm{be}}t}\left(\sum_{j=1}^m\alpha_j^2\right)^k \sum_{j_1=1}^{q} \cdots \sum_{j_k=1}^{q}\hat{w}_{(j_k)}\cdots \hat{w}_{(j_k, \ldots, j_1)} = \frac{t^k}{(k-1)!}e^{2\norm{\mathcal{L}}_{\mathrm{be}}t}\left(\sum_{j=1}^m\alpha_j^2\right)^k.
  \end{align}
  Finally, for the approximation
  \begin{align}
    \mathcal{K}[e^{t}]
    +\sum_{k=1}^K \sum_{j_1=1}^{q} \cdots \sum_{j_k=1}^{q}\mathcal{F}_{k}\left(\hat{x}_{(j_k)}, , \ldots, \hat{x}_{(j_k, \ldots, j_1)}\right)\hat{w}_{(j_k)}\cdots \hat{w}_{(j_k, \ldots, j_1)},
  \end{align}
  the sum-of-squares of the normalizing constants of its Kraus operators is
  \begin{align}
    e^{2\norm{\mathcal{L}}_{\mathrm{be}}t} &+ \sum_{k=1}^K \frac{t^k}{(k-1)!}e^{2\norm{\mathcal{L}}_{\mathrm{be}}t}\left(\sum_{j=1}^m\alpha_j^2\right)^k \\
    &= e^{2\norm{\mathcal{L}}_{\mathrm{be}}t} + te^{2\norm{\mathcal{L}}_{\mathrm{be}}t}\sum_j\alpha_j^2\sum_{k=1}^K \frac{t^{k-1}(\sum_j\alpha_j^2)^{k-1}}{(k-1)!} \\
                                                                                                                                                        &\leq e^{2\norm{\mathcal{L}}_{\mathrm{be}}t} + t\sum_j\alpha_j^2e^{2\norm{\mathcal{L}}_{\mathrm{be}}t}e^{t\sum_j\alpha_j^2}.
  \end{align}

  Note that the inequality above provides an upper bound for the sum-of-squares of the normalizing constants. There is a closed-form expression for this quantity. By setting the right hand side to 2, and  solve the equation, the above upper bound implies that $t$ must satisfy
  \begin{align}
    t\norm{\mathcal{L}}_{\mathrm{be}} = \Theta(1).
  \end{align}
  Then we use \cref{lemma:aa} to boost the success probability to 1 with only three application of the circuit.
  
  \textbf{Determining truncation orders.}
  Now, we analyze the error by setting $t\norm{\mathcal{L}}_{\mathrm{be}} = \Theta(1)$. By \cref{lemma:gk} and \cref{lemma:int-to-sum}, the total approximation error can be bounded by the following.
  \begin{align}
    &\norm{e^{\mathcal{L}t} - \mathcal{K}[e^{Jt}] -\sum_{k=1}^K \sum_{j_1=1}^{q} \cdots \sum_{j_k=1}^{q}\mathcal{F}_{k}\left(\hat{x}_{(j_k)}, , \ldots, \hat{x}_{(j_k, \ldots, j_1)}\right)\hat{w}_{(j_k)}\cdots \hat{w}_{(j_k, \ldots, j_1)}}_{\diamond}  \\
    &\leq \frac{(2\norm{\mathcal{L}}_{\mathrm{be}})^{K+1}}{(K+1)!}  + O\left(\sum_{k=1}^K\frac{(2t)^{k-1}2^{k+1}\norm{\mathcal{L}}_{\mathrm{be}}^k\norm{J}^{(2q)}t^{2q+1}q}{(k-1)!(2q)!}\right) \\
    &= \frac{(2\norm{\mathcal{L}}_{\mathrm{be}})^{K+1}}{(K+1)!} + \frac{\norm{J}^{2q}t^{2q+1}q}{(2q)!}O\left(\sum_{k=1}^K\frac{(2t)^{k-1}2^{k+1}\norm{\mathcal{L}}_{\mathrm{be}}^k}{(k-1)!}\right) \\
    &= \frac{(2\norm{\mathcal{L}}_{\mathrm{be}})^{K+1}}{(K+1)!} + \frac{\norm{J}^{2q}t^{2q+1}q}{(2q)!}O\left(e^{4t\norm{\mathcal{L}}_{\mathrm{be}}}\right) \\
    &\leq \frac{(2\norm{\mathcal{L}}_{\mathrm{be}})^{K+1}}{(K+1)!} + \frac{\norm{\mathcal{L}}_{\mathrm{be}}^{2q}t^{2q+1}q}{(2q)!}O\left(e^{4t\norm{\mathcal{L}}_{\mathrm{be}}}\right),
  \end{align}
  where the last inequality follows from
  \begin{align}
    \norm{J} \leq \norm{H} + \frac{1}{2}\sum_{j}\norm{L_j}^2 \leq \alpha_0 + \frac{1}{2}\sum_j\alpha_j^2 = \norm{\mathcal{L}}_{\mathrm{be}}.
  \end{align}
  With $t\norm{\mathcal{L}}_{\mathrm{be}} = \Theta(1)$, it suffices to set
  \begin{align}
    K,q = O\left(\frac{\log(1/\epsilon)}{\log\log(1/\epsilon)}\right)
  \end{align}
  to make the approximation error at most $\epsilon/2$.

  \textbf{Applying the main algorithmic tool (\cref{lemma:block-encoding-channel}).}
  In \cref{lemma:block-encoding-channel}, we need to prepare the special state $\ket{\mu}$ which encodes a superposition of normalizing constants of all Kraus operators. Now we show how to efficiently prepare this state. First observe that the normalizing constant for $A_j$ in \cref{eq:aj} is
  \begin{align}
    &\quad \sqrt{\hat{w}_{(j_k)}\cdots\hat{w}_{(j_k,\ldots,j_1)}} e^{\norm{\mathcal{L}}_{\mathrm{be}}(t-\hat{x}_{(j_k)})}\alpha_{\ell_{k}}\cdots e^{\norm{\mathcal{L}}_{\mathrm{be}}(\hat{x}_{(j_k,\ldots,j_{2})}-\hat{x}_{(j_k,\ldots,j_{1})})} \alpha_{\ell_1}e^{\norm{\mathcal{L}}_{\mathrm{be}}(\hat{x}_{(j_k,\ldots,j_{1})})} \\
    &=\sqrt{\hat{w}_{(j_k)}\cdots\hat{w}_{(j_k,\ldots,j_1)}}e^{\norm{\mathcal{L}}_{\mathrm{be}}t}\alpha_{\ell_k}\cdots\alpha_{\ell_1} \\
    &= \frac{1}{\sqrt{t^{k(k-1)/2}}}\sqrt{w_{j_k}\hat{s}_{j_k}^{k-1}w_{j_{k-1}}\hat{s}_{j_{k-1}}^{k-2}\cdots w_{j_2}\hat{s}_{j_2} w_{j_1}}e^{\norm{\mathcal{L}}_{\mathrm{be}}t}\alpha_{\ell_k}\cdots\alpha_{\ell_1},
  \end{align}
  and the normalizing constant for $A_0$ is $e^{\norm{\mathcal{L}}_{\mathrm{be}}t}$. Note that $e^{\norm{\mathcal{L}}_{\mathrm{be}}t}$ appears in every amplitude of $\ket{\mu}$ and therefore can be ignored. We use three registers in $\ket{\mu}$. The first register contains $K$ qubits and it encodes $k$ in unary representation, i.e., we use $\ket{1^k0^{K-1}}$ to represent $k$. The second register contains $k$ subregisters of $\log m$ qubits to represent $\ell_1,\ldots, \ell_k$. The third register contains $k$ subregisters of $\log q$ qubits to represent $j_1,\ldots, j_k$. We first prepare the normalized version of the  state $\sum_{k=0}^K\frac{1}{\sqrt{t^{k(k-1)/2}}}\ket{1^k0^{K-1}}$, which can be done using $O(K)$ gates: we apply a rotation on the first qubit, and then apply a rotation on each subsequent qubit controlled by the previous qubit. For the second register, in each subregister we prepare the normalized version of $\sum_{j=1}^m\alpha_j\ket{j}$. The total gate cost for the second register is $O(Km)$. For the $\ell$-th subregister of the third register, we prepare the normalized version of $\sum_{j=1}^q\sqrt{w_js_j^{\ell-1}\ket{j}}$. The total gate cost for the third register is $O(Kq)$. Note that each gate acting on the $\ell$-th subregister of the second and the third register is further controlled on the $\ell$-th qubit of the first register, which effects the truncation. Therefore, the total gate cost for preparing $\ket{\mu}$ is $O(K(m+q))$.

  Now, we show how to use \cref{lemma:sum-to-be} to approximate the block-encoding $U_s$ of $e^{Js}$ for any $0 < s \leq t$, where $s$ is provided in a time register containing $\ket{s}$. Here we use $K'$ to denote the Taylor series truncation error. So we need to use \cref{lemma:sum-to-be} to implement a block-encoding of $\sum_{k=0}^{K'}s^kJ^k/k!$. Recall that $J = -iH - \frac{1}{2}\sum_{j=1}^mL_j^{\dag}L_j$. In \cref{lemma:sum-to-be}, we need to implement the $B$ gate for preparing a superposition of coefficients. We use $K'+1$ control registers: the first register contains $K$ qubits which encode $k$ in unary; each subsequent register contains $O(\log(m))$ qubits. The $B$ gate is implemented as follows. Controlled by the time register $\ket{s}$, we implement the normalized version of the state $\sum_{k=0}^{K'}\sqrt{s^k/k!}\ket{1^k0^{K'-k}}$ on the first control register, which can be done with $O(K)$ controlled-rotations. For each subsequent control resister, we implement the normalized version of the state $\ket{0}+\sum_{j=1}^m\sqrt{\alpha_j^2/2}\ket{j}$, which costs $O(m)$ gates. The controlled operation $\sum_j\ketbra{j}{j}\otimes A_j$ can be implemented by the controlled-$U_H$ and controlled-$U_{L_j}$ controlled by the $K+1$ control registers. Therefore, the total gate cost for implementing $B$ is $O(K'm)$. The controlled rotations on the first control register controlled by the time register costs $O(\poly(b))$ gates where $b$ is the bits used to represent $s$. It suffices to set $b = O(\log(1/\epsilon))$ for a precise representation of $s$ within $\epsilon$. As a result, the cost $O(\poly(b))$ is not dominating. As a result, the total gate cost for implementing $\sum_s \ketbra{s}{s}\otimes U_{s}$ is $O(Km)$. 

  \textbf{Additional approximation.}
  It is important to note that by a direct application of \cref{lemma:sum-to-be}, the error of the block-encoding we implement is $O(\epsilon' e^{s\norm{\mathcal{L}}_{\mathrm{be}}})$. However, a more careful analysis shows a much better error bound: first assume we had implemented the infinity Taylor series. Then the error $\epsilon'$ of each block-encoding will cause error for the implementation that is bounded by\footnote{The inequality $\norm*{e^{Js} - e^{\tilde{J}s}} \leq \norm{J - \tilde J}s$ does not hold for general matrices $J$. However, in our case it holds because $J$ is dissipative and hence $\norm*{e^{Js}} \leq 1$ for all $s \geq 0$.} $\norm*{e^{Js} - e^{\tilde{J}s}} \leq \norm{J - \tilde J}s \leq \epsilon' s\norm{\mathcal{L}}_{\mathrm{be}}$. Further, \cref{lem:taylor} implies that the error caused by the truncation is $\frac{(2e^{s\norm{\mathcal{L}}_{\mathrm{be}}})^{K'+1}}{(K'+1)'}$. By assuming $\epsilon' \leq \epsilon/(t(\norm{\mathcal{L}}_{\mathrm{be}})$, the truncation error will dominate by our choice of $K'$.

  In \cref{lemma:block-encoding-channel}, we need $\ketbra{j}{j}\otimes A_j$, where $A_j$ is defined in \cref{eq:aj}. This can be implemented by a sequence of at most $K$ controlled-$U_{s}$ and at most $K$ controlled-$U_{L_j}$. Note that the time register required for implementing $U_{s}$ can be extracted from the index $j$, and then uncomputed. Therefore, the gate cost for this is $O(KK'm)$. Therefore, the additional 1- and 2-qubits for this implementation is dominated by $O(KK'm)$.

  Next, we analyze how the truncation of $e^{Js}$ at order $K'$ affects the total error. By \cref{lemma:jk-fk}, we have
  \begin{align}
    \begin{aligned}
      \|\mathcal{J}_{K'}(t-s_m) &\mathcal{L}_\mathrm{J} \mathcal{J}_{K'}(s_m-s_{m-1}) \mathcal{L}_\mathrm{J} \cdots \mathcal{J}_{K'}(s_2-s_1) \mathcal{L}_\mathrm{J} \mathcal{J}_{K'}(s_1) 
                                                            -\mathcal{F}_{k}(s_k, \ldots, s_1)\|_{\diamond} \\
                                  &\leq \frac{8e^{\norm{\mathcal{L}}_{\mathrm{be}}t}\norm{\mathcal{L}}_{\mathrm{be}}^{K'+1}}{(K'+1)!}(2\norm{\mathcal{L}}_{\mathrm{be}})^k2^kt^{K'+1}. 
    \end{aligned}
  \end{align}

  Taking the weighted sum for quadrature points, the error is at most
  \begin{align}
    \begin{aligned}
      \frac{8e^{\norm{\mathcal{L}}_{\mathrm{be}}t}\norm{\mathcal{L}}_{\mathrm{be}}^{K'+1}}{(K'+1)!}(2\norm{\mathcal{L}}_{\mathrm{be}})^k2^kt^{K'+1}\sum_{j_1=1}^q\cdots\sum_{j_k=1}^q\hat{w}_{(j_k)}\cdots \hat{w}_{(j_k,\ldots,j_1)} \\
      =  \frac{8t^ke^{\norm{\mathcal{L}}_{\mathrm{be}}t}\norm{\mathcal{L}}_{\mathrm{be}}^{K'+1}}{(k-1)!(K'+1)!}(2\norm{\mathcal{L}}_{\mathrm{be}})^k2^kt^{K'+1}.
    \end{aligned}
  \end{align}
  Therefore, the total error is
  \begin{align}
    \sum_{k=1}^K\frac{8t^ke^{\norm{\mathcal{L}}_{\mathrm{be}}t}\norm{\mathcal{L}}_{\mathrm{be}}^{K'+1}}{(k-1)!(K'+1)!}(2\norm{\mathcal{L}}_{\mathrm{be}})^k2^kt^{K'+1} \leq  \frac{32e^{5\norm{\mathcal{L}}_{\mathrm{be}}t}\norm{\mathcal{L}}_{\mathrm{be}}^{K'+2}t^{K'+2}}{(K'+1)!}.
  \end{align}
  With $t\norm{\mathcal{L}}_{\mathrm{be}} = \Theta(1)$, it suffices to set
  \begin{align}
    K' = O\left(\frac{\log(1/\epsilon)}{\log\log(1/\epsilon)}\right)
  \end{align}
  to make this error $\leq \epsilon/2$. Therefore, the total error is bounded by $\epsilon$.
  
  \textbf{Multiple simulation blocks.}
  For arbitrary evolution time $t$, we divide it into $O(t\norm{\mathcal{L}}_{\mathrm{be}})$ segments and set
  \begin{align}
    K, K', q = O\left(\frac{\log(t\norm{\mathcal{L}}_{\mathrm{be}}/\epsilon)}{\log\log(t\norm{\mathcal{L}}_{\mathrm{be}}/\epsilon)}\right)
  \end{align}
  so that the total error of the $O(t\norm{\mathcal{L}}_{\mathrm{be}})$ segments is within $\epsilon$. For the remaining smaller segment of this division, the normalizing constant is smaller which yields a larger success probability. However, the amplitude amplification will overshoot. We use standard technique by adding an ancillary qubit and use a rotation to dilute the success probability to 1/4. 
\end{proof}

\section{Conclusion and open questions}
In this paper, we presented a quantum algorithm for simulating Lindblad evolution, which captures the dynamics of Markovian open quantum systems. The algorithm can be used to forecast the dynamics of a quantum system interacting with an environment. Informally, the complexity of our algorithm scales as $\CO(t\,\mathrm{polylog}(t/\epsilon))$, which matches the previous state-of-the-art algorithm. Our algorithm is based on a conceptually novel mathematical treatment of evolution channel to preserve its complete positivity: we use a higher-order series expansion based on Duhamel's principle, and we approximate the integrals by scaled Gaussian quadrature, which exponentially reduces the number of terms in the summation. Our mathematical treatment trades off mathematical simplicity for technical conciseness, and it yields a much simpler algorithm based on linear combination of unitaries. We also outlined how our algorithm can be generalized to simulate time-dependent Lindbladians. Moreover, our approximation of multiple integrals using scaled Gaussian quadrature can be potentially used to produce a more efficient approximation of time-ordered integrals, which will simplify existing quantum algorithms for simulating time-dependent Hamiltonians based on a truncated Dyson series, e.g.,~\cite{KRS2011}.

The open questions of this work are summarized as follows.
\begin{itemize}
  \item Can we achieve the additive complexity, i.e., $\CO(t+\mathrm{polylog}(1/\epsilon))$? This additive complexity has been achieved for simulating Hamiltonian evolution by quantum signal processing~\cite{LC17} and quantum singular transformation~\cite{GSLW19}, and it is proved to be optimal~\cite{BCK15}. As Hamiltonian evolution is a special case of Lindblad evolution, the complexity for simulating the latter is at least $\Omega(t+\mathrm{polylog}(1/\epsilon))$. It is yet unknown how to generalize the techniques of quantum signal processing and quantum singular value transformation to superoperators.
  \item What are the practical performances of our algorithm? For Hamiltonian simulation, although LCU-based algorithms have a better asymptotic scaling, it was reported in~\cite{CMNRS18} that Trotter-based algorithms surprisingly perform just as well in practice. Regarding simulating Lindblad evolution, do LCU-based algorithms, i.e., the algorithms presented in this paper and~\cite{CW17}, have a practical advantage compared with Trotter-based simulation algorithms, e.g.,~\cite{KBG11,CL17}? An empirical study on the performances of quantum algorithms for simulating open quantum systems would be beneficial.
\end{itemize}

\section*{Acknowledgement}
We thank the anonymous reviewers for the valuable comments. We also thank Zecheng Li for pointing out various typos. XL's research is supported by the National Science Foundation Grants DMS-2111221. CW acknowledges support from National Science Foundation grant CCF-2238766 (CAREER). Both XL and CW were supported by a seed grant from the Institute of Computational and Data Science (ICDS).

\bibliographystyle{plain}
\bibliography{ref}

\appendix

\section{Simulating time-dependent Lindbladians}
\label{sec:timedependent}
There exist natural generalizations to Lindblad equations. One such generalization is time-dependent Markovian open quantum systems, which arises in the context of quantum heat engine~\cite{Alicki79,KL14,AG15,RK06} and controlling open quantum systems~\cite{Koch16,LKMKT18,SBMV18}. In this section, we sketch how our simulation techniques can be generalized to the case of time-dependent Lindbladians. More specifically, consider a time-dependent version of \cref{eq: lindb}:
\begin{align}
  \label{eq: lindb-td}
  \frac{\dd}{\dd t}\rho = \mathcal{L}(t)(\rho) \coloneqq -i[H(t), \rho] + \sum_{j=1}^m\left(L_j(t)\rho L_j^{\dag(t)} - \frac{1}{2}\left\{L_j(t)^{\dag}L_j(t), \rho\right\}\right).
\end{align}
Now, $H(t)$ and $L_j(t)$ are time-dependent. We decompose this time-dependent Lindbladian into drift terms and jump terms as:
\begin{align}\label{eq: LLJLL-td}
  \mathcal{L}(t) = \mathcal{L}_\mathrm{D}(t) &+ \mathcal{L}_\mathrm{J}(t), \quad \text{ and}\\
  \mathcal{L}_\mathrm{D}(t)(\rho) \coloneqq J(t)\rho + \rho J(t)^{\dag}, &\quad \mathcal{L}_\mathrm{J}(t)(\rho) = \sum_{j=1}^m L_j(t)\rho L_j(t)^{\dag}.
\end{align}
We express the evolution driven by $\mathcal{L}_\mathrm{D}$ as, 
\begin{align}
  \rho_t = \mathcal{V}(0, t) \coloneqq V(0, t) \rho_0  V(0,t)^\dagger,  
\end{align}
where $V(s,  t)$ satisfies the equation,
\begin{equation}
  \frac{d}{dt} V(s,t) = J(t) V(s,t), \quad \text{and } \quad V(s,s)=I.
\end{equation}
One can express the unitary $V(0, t)$ using time-ordered evolution operators,
\begin{align}
V(s,t)= \mathcal{T} e^{\int_s^t J(\tau) d\tau}.
\end{align}

Further, for \cref{eq: lindb}, the Duhamel's principle implies a generalization of \cref{eq: formula},
\begin{align}\label{eq: formula'}
  \rho_t =  \mathcal{V}(0,t)(\rho_0) + \int_0^t \mathcal{V}(s,t)(\mathcal{L}_\mathrm{J}(s) (\rho_s)) \dd s.
\end{align}

In the Hamiltonian simulation \cite{kieferova2019simulating},  Such an operator is approximated by Dyson series,
\begin{equation}
  V(0,t) = \sum_{k=0}^K \frac{t^k}{M^k k!} \sum_{j_1,j_2, \ldots, j_k=0}^{M-1} \mathcal{T} J(t_k) \cdots J(t_1) + \mathcal{O}\left( \frac{(\|J\|_{\max}  t)^{K+1} }{(K+1)!} + \frac{t^2 \norm*{\dot{J}}_{\max}  }{M}\right), 
\end{equation}
where $\mathcal{T}$ indicates a strict time-ordering $t_1\leq t_2\leq \cdots \leq t_k$ in the product.
The formula here approximates the evolution from $0$ to $t.$ This can be easily extended to another interval, due to the observation that,
\begin{align}
V(s,t)= \mathcal{T} e^{\int_s^t J(\tau) d\tau} = \mathcal{T} e^{\int_0^{t-s} J(s+\tau) d\tau},
\end{align}
which leads to
\begin{equation}
  V(s,t) = \sum_{k=0}^K \frac{t^k}{M^k k!} \sum_{j_1,j_2, \ldots, j_k=s}^{M-1}\mathcal{T} J(t_k) \cdots J(t_1) + \mathcal{O}\left( \frac{(\|J\|_{\max}  t)^{K+1} }{(K+1)!} + \frac{t^2 \norm*{\dot{J}}_{\max}  }{M}\right). 
\end{equation}

This suggests that, by repeatedly applying \cref{eq: formula}, we can adapt our series expansion in \cref{eq: GK} to
\begin{align}\label{eq: GKtd}
  \mathcal{G}_{K}(t) \coloneqq \mathcal{K}[V(0,t)] + \sum_{k=1}^K \int_{0 \leq s_1 \leq \cdots \leq s_k \leq t} \mathcal{F}_{k}(s_k, \ldots, s_1)\, \dd s_1 \cdots \dd s_k,
\end{align}
where
\begin{align}
  \label{eq:fktd}
  \begin{aligned}
   &\mathcal{F}_k(s_k, \ldots, s_1) \\
   &\coloneqq
   \mathcal{K}[V(s_k,t)] \mathcal{L}_\mathrm{J}(s_k) \mathcal{K}[V(s_{k-1}, s_k)] \mathcal{L}_\mathrm{J}(s_{k-1}) \cdots \mathcal{K}[V(s_1, s_2)] \mathcal{L}_\mathrm{J}(s_1) \mathcal{K}[V(0,s_1)].
  \end{aligned}
\end{align}

Then, we can approximate the integral using scaled Gaussian quadrature as in \cref{sec:quadrature}, and implement the completely positive map using the techniques presented in \cref{sec:alg}. Further note that we use a truncated Dyson series
\begin{equation}
    \mathcal{J}_K = \mathcal{K}\left[ \sum_{k=0}^K \frac{t^k}{M^k k!} \sum_{j_1,j_2, \cdots, j_k=s}^{M-1} J(t_k) \cdots J(t_1) \right].
\end{equation}
to approximate $V(s, t)$.

\section{Technical proofs related to higher-order series expansion}
\label{sec:proof-higherorder}

\higherorder*
\begin{proof}
  The error is the last term is \cref{eq: duhem}. Due to the fact that $\mathcal{L}_\mathrm{D}$ is dissipative, we have that $\norm{e^{\mathcal{L}_\mathrm{D}t}}_{\diamond} \leq 1.$ The last integral can be directly bounded by,
\begin{align}
  \begin{aligned}\label{eq: duham-err}
  &\left\|
  \int_{0 \leq s_1 \leq \cdots \leq s_{K+1} \leq t} e^{\mathcal{L}_\mathrm{D}(t-s_{K+1})} \mathcal{L}_\mathrm{J} e^{\mathcal{L}_\mathrm{D}(s_{K+1}-s_K)} \mathcal{L}_\mathrm{J} \cdots e^{\mathcal{L}_\mathrm{D}(s_2-s_1)} \mathcal{L}_\mathrm{J} (\rho_{s_1}) \dd s_1 \cdots \dd s_{K+1}\right\|_{\diamond} \\
  \leq & \int_{0 \leq s_1 \leq \cdots \leq s_{K+1} \leq t}  \dd s_1 \cdots \dd s_{K+1} 
  \norm{\mathcal{L}_\mathrm{J}}_{\diamond}^{K+1} = \frac{(\norm{\mathcal{L}_\mathrm{J}}_{\diamond} t)^{K+1}}{(K+1)!}.
  \end{aligned}
\end{align}
In addition, we have
\begin{align}
  \norm{\mathcal{L}_\mathrm{J}}_{\diamond} \leq \sum_{j=1}^m\norm{\mathcal{K}[L_j]}_{\diamond} = \sum_{j=1}^m\norm{\mathcal{K}[L_j\otimes I]}_1\leq \sum_{j=1}^m\norm{L_j}^2 \leq \sum_{j=1}^m\alpha_j^2 \leq 2\norm{\mathcal{L}}_{\mathrm{be}},
\end{align}
where we have used the inequality $\norm{ABC}_1 \leq \norm{A}\norm{B}_1\norm{C}$. The claimed bound is established.
\end{proof}

\jktaylor*
\begin{proof}
  Consider the remainder of the infinite series. We have
  By Taylor's theorem, 
  \begin{align}
    \label{eq:talor-rem}
    \norm{\sum_{\ell=k+1}^{\infty}\frac{J^{\ell}t^{\ell}}{\ell!}} \leq \frac{e^{\norm{J}t}\norm{J}^{k+1}t^{k+1}}{(k+1)!}.
  \end{align}

  It is also useful to obtain a tighter bound on the truncated series $\norm{\sum_{\ell=0}^k J^{\ell}t^{\ell}/\ell!}$. To achieve this, first observe that the eigenvalues of $J = -iH -\frac{1}{2}\sum_{j}L_j^{\dag}L_j$ have negative real parts. This implies that $\norm{e^{Jt}} \leq 1$. Using \cref{eq:talor-rem} and the triangle inequality, we have 
  \begin{align}
    \norm{\sum_{\ell=0}^{k}\frac{J^{\ell}t^{\ell}}{\ell!}} \leq \norm{e^{Jt}} + \norm{\sum_{\ell=k+1}^{\infty}\frac{J^{\ell}t^{\ell}}{\ell!}} \leq 2,
  \end{align}
  Note that we have used the assumption $(k+1)! \geq 2\norm{J}^{k+1}t^{k+1}$.

  Consider an arbitrary matrix $X \in \bbc^{2^n \times 2^n}$ with $\norm{X}_1 \leq 1$, we have
   \begin{align}
    \norm{\left(e^{\mathcal{L}_\mathrm{D}t} - \mathcal{J}_k(t)\right)(X)}_1 
                                                                   &\leq 2\norm{\sum_{\ell=0}^k\frac{J^{\ell}t^{\ell}}{\ell!}}\norm{\sum_{\ell=k+1}^{\infty}\frac{J^{\ell}t^{\ell}}{\ell!}} + \norm{\sum_{\ell=k+1}^{\infty}\frac{J^{\ell}t^{\ell}}{\ell!}}^2\\
                                                                   &\leq 4\norm{\sum_{\ell=k+1}^{\infty}\frac{J^{\ell}t^{\ell}}{\ell!}} + \norm{\sum_{\ell=k+1}^{\infty}\frac{J^{\ell}t^{\ell}}{\ell!}}^2\\
                                                                   &\leq \frac{4e^{\norm{J}t}\norm{J}^{k+1}t^{k+1}}{(k+1)!} + \left(\frac{e^{\norm{J}t}\norm{J}^{k+1}t^{k+1}}{(k+1)!}\right)^2 \\
                                                                   &\leq \frac{8e^{\norm{J}t}\norm{J}^{k+1}t^{k+1}}{(k+1)!},
  \end{align}
  where we have used the inequality $\norm{ABC}_1 \leq \norm{A}\norm{B}_1\norm{C}$.

  Now we extend this bound from the induced trace norm to the diamond norm by considering a larger Hilbert space. Suppose $\mathcal{L}_\mathrm{D}$ is acting on $\bbc^{2^n}$. Consider a Hilbert space $\bbc^{2^{n'}\times 2^{n'}}$ and let $\mathcal{I}_{2^{n'}}$ be the identity map acting on this space. We have
  \begin{align}
    (e^{\mathcal{L}_\mathrm{D}t} - \mathcal{J}_k(t))\otimes \mathcal{I}_{2^{n'}} = e^{(\mathcal{L}_\mathrm{D}\otimes \mathcal{I}_{2^{n'}})t} - \mathcal{K}\left[\sum_{\alpha=0}^k \frac{(J\otimes I)^{\alpha}t^{\alpha}}{\alpha!}\right].
  \end{align}
  Note that when $n' = n$, $\norm{\mathcal{L}_\mathrm{D}\otimes\mathcal{I}_{2^{n'}}}_1 = \norm{\mathcal{L}_\mathrm{D}}_{\diamond}$. As a result,
  \begin{align}
    \norm{e^{\mathcal{L}_\mathrm{D}t} - \mathcal{J}_k(t)}_{\diamond}  &= \norm{(e^{\mathcal{L}_\mathrm{D}t} - \mathcal{J}_k(t))\otimes \mathcal{I}_{2^{n'}}}_1 \\
                                                             &= \norm{e^{(\mathcal{L}_\mathrm{D}\otimes \mathcal{I}_{2^{n'}})t} - \mathcal{K}\left[\sum_{\alpha=0}^k \frac{(J\otimes I)^{\alpha}t^{\alpha}}{\alpha!}\right]}_1 \\
                                                             &\leq \frac{8e^{\norm{J}t}\norm{J\otimes I}^{k+1}t^{k+1}}{(k+1)!} \\
                                                             &= \frac{8e^{\norm{J}t}\norm{J}^{k+1}t^{k+1}}{(k+1)!}.
  \end{align}
  Then the claimed bound holds by the observation that
  \begin{align}
    \label{eq:norm-j-norm-be}
    \norm{J} \leq \norm{H} + \frac{1}{2}\sum_{j}\norm{L_j}^2 \leq \alpha_0 + \frac{1}{2}\sum_j\alpha_j^2 = \norm{\mathcal{L}}_{\mathrm{be}}.
  \end{align}
\end{proof}

\diamondnorm*
\begin{proof}
  By \cref{lem:taylor}, we know that
  \begin{align}
    \norm{\mathcal{K}[e^{Jt}]-\mathcal{J}_{K'}(t)}_{\diamond} \leq \frac{8e^{\norm{\mathcal{L}}_{\mathrm{be}}t}\norm{\mathcal{L}}_{\mathrm{be}}^{K'+1}t^{K'+1}}{(K'+1)!}.
  \end{align}
  By the triangle inequality, we have
  \begin{align}
    \label{eq:norm-jmt}
    \norm{\mathcal{J}_{K'}(t)}_{\diamond} \leq 1+ \frac{8e^{\norm{\mathcal{L}}_{\mathrm{be}}t}\norm{\mathcal{L}}_{\mathrm{be}}^{K'+1}t^{K'+1}}{(K'+1)!} \leq 2,
  \end{align}
  when we assume a reasonably large $K'$, i.e., $(K'+1)! \geq 8e^{\norm{\mathcal{L}}_{\mathrm{be}}t}\norm{\mathcal{L}}_{\mathrm{be}}^{K'+1}t^{K'+1}$.

  Consider each $1 \leq k \leq K$. Using a hybrid argument and triangle inequalities, we have
  \begin{align}
    \|&\mathcal{J}_{K'}(t-s_m) \mathcal{L}_\mathrm{J} \mathcal{J}_{K'}(s_m-s_{m-1}) \mathcal{L}_\mathrm{J} \cdots \mathcal{J}_{K'}(s_2-s_1) \mathcal{L}_\mathrm{J} \mathcal{J}_{K'}(s_1) 
                                                            -\mathcal{F}_{k}(s_k, \ldots, s_1)\|_{\diamond} \\
                                                               &\leq \sum_{j=0}^k\norm{\mathcal{J}_{M}(t- s_m)\mathcal{L}_\mathrm{J}\cdots \left(\mathcal{J}_{K'}(s_{j+1}-s_j)-\mathcal{K}[e^{J(s_{j+1}-s_j)}]\right)\mathcal{L}_\mathrm{J} \cdots \mathcal{K}[e^{Js_1}]}_{\diamond} \\
                                                                &\leq \sum_{j=0}^k\norm{\mathcal{L}_\mathrm{J}}_{\diamond}^k\cdot 2^j \cdot \frac{8e^{\norm{\mathcal{L}}_{\mathrm{be}}t}\norm{\mathcal{L}}_{\mathrm{be}}^{K'+1}(s_{j+1}-s_j)^{K'+1}}{(K'+1)!} \\
                                &\leq \frac{8e^{\norm{\mathcal{L}}_{\mathrm{be}}t}\norm{\mathcal{L}}_{\mathrm{be}}^{K'+1}}{(K'+1)!}\norm{\mathcal{L}_\mathrm{J}}_{\diamond}^k\sum_{j=0}^m 2^j (s_{j+1}-s_j)^{M+1} \\
                                &\leq \frac{8e^{\norm{\mathcal{L}}_{\mathrm{be}}t}\norm{\mathcal{L}}_{\mathrm{be}}^{K'+1}}{(K'+1)!}\norm{\mathcal{L}_\mathrm{J}}_{\diamond}^k2^k\left(\sum_{j=0}^m  s_{j+1}-s_j\right)^{K'+1} \\
                                &\leq \frac{8e^{\norm{\mathcal{L}}_{\mathrm{be}}t}\norm{\mathcal{L}}_{\mathrm{be}}^{K'+1}}{(K'+1)!}\norm{\mathcal{L}_\mathrm{J}}_{\diamond}^k2^kt^{K'+1} \\
                                &\leq \frac{8e^{\norm{\mathcal{L}}_{\mathrm{be}}t}\norm{\mathcal{L}}_{\mathrm{be}}^{K'+1}}{(K'+1)!}(2\norm{\mathcal{L}}_{\mathrm{be}})^k2^kt^{K'+1}. 
  \end{align}
\end{proof}

\section{Technical proofs related to scaled Gaussian quadrature}
\label{sec:proof-quadrature}

\derivativebound*
\begin{proof}
  We have
  \begin{align}
    \frac{\dd}{\dd s_{j}} \mathcal{F}_{k} &= \mathcal{K}[e^{J(t-s_k)}] \mathcal{L}_\mathrm{J} \cdots \frac{\dd}{\dd s_{j}}(\mathcal{K}[e^{J(s_{j+1}-s_{j})}] \mathcal{L}_\mathrm{J} \mathcal{K}e^{J(s_j-s_{j-1})}]) \mathcal{L}_\mathrm{J} \cdots \mathcal{K}[e^{J(s_1-0)}] \\
                                             & = \mathcal{K}[e^{J(t-s_k)}] \mathcal{L}_\mathrm{J} \cdots \frac{\dd}{\dd s_{j}}(\mathcal{K}[e^{J(s_{j+1}-s_{j})}]) \mathcal{L}_\mathrm{J} \mathcal{K}[e^{J(s_j-s_{j-1})}] \mathcal{L}_\mathrm{J} \cdots \mathcal{K}[e^{J(s_1-0)}]  \\
                                        &\quad + \mathcal{K}[e^{J(t-s_k)}] \mathcal{L}_\mathrm{J} \cdots \mathcal{K}[e^{J(s_{j+1}-s_{j})}] \mathcal{L}_\mathrm{J} \frac{\dd}{\dd s_j}(\mathcal{K}[e^{J(s_j-s_{j-1})}]) \mathcal{L}_\mathrm{J} \cdots \mathcal{K}[e^{J(s_1-0)}].
  \end{align}
  By induction, it is not hard to show that
  \begin{align}
    \label{eq:dfmn}
    \frac{\dd^k}{\dd s_j^{k'}}\mathcal{F}_{k} = \mathcal{K}[e^{J(t-s_k)}] \mathcal{L}_\mathrm{J} \cdots \left(\sum_{\ell=0}^{k'} \binom{k'}{\ell}\frac{\dd^{k'-\ell}}{\dd s_j^{k'-\ell}}(\mathcal{K}[e^{J(s_{j+1}-s_{j})}]) \mathcal{L}_\mathrm{J} \frac{\dd^\ell}{\dd s_j^\ell}(\mathcal{K}[e^{J(s_j-s_{j-1})}])\right) \mathcal{L}_\mathrm{J} \cdots \mathcal{K}[e^{J(s_1-0)}].
  \end{align}
  
  Now, we consider the $p$-th order derivatives of $\mathcal{K}[e^{J(s_{j+1}-s_j)}]$ and $\mathcal{K}[e^{J(s_{j}-s_{j-1})}]$. For any matrix $A$ with $\norm{A}_1 = 1$, we have
  \begin{align}
    \begin{aligned}
      \frac{\dd^p}{\dd s_j^p} &\mathcal{K}[e^{J(s_{j+1}-s_j)}](A) = 
      \sum_{\ell=0}^p \binom{p}{\ell}\left(\frac{\dd^{p-\ell}}{\dd s_j^{p-\ell}}e^{J(s_{j+1}-s_j)}\right) A 
      \left(\frac{\dd^{\ell}}{\dd s_j^{\ell}}e^{J(s_{j+1}-s_j)}\right)
    \end{aligned}
  \end{align}
  
  For the Kraus operators, we have the bound
  \begin{align}
    \label{eq:dkraus}
    \norm{\frac{\dd^{\ell}}{\dd s_j^{\ell}}e^{J(s_{j+1}-s_j)}} \leq \norm{J}^{\ell}.
  \end{align}
This gives the bound,
  \begin{align}
    \norm{\frac{\dd^p}{\dd s_j^p} \mathcal{K}[e^{(s_{j+1},s_j)}](A)}_1 \leq 2^{p}\norm{J}^p.
  \end{align}
  The above bound can be easily extended to the diamond norm by extending the Kraus operator in \cref{eq:dkraus} to a larger space by tensoring an identity operator. As a result, we have
  \begin{align}
    \label{eq:norm-djm}
    \norm{\frac{\dd^p}{\dd s_j^p} \mathcal{K}[e^{(s_{j+1},s_j)}]}_{\diamond} \leq 2^{p}\norm{J}^p.
  \end{align}
  
  Combining \cref{eq:norm-jmt,eq:dfmn,eq:norm-djm}, we have the stated error bound. 
  \begin{align}
    \norm{\frac{\dd^{k'}}{\dd s_j^{k'}}\mathcal{F}_{k}}_{\diamond} \leq 2^{k'}\norm{\mathcal{L}_\mathrm{J}}_{\diamond}^k(2\norm{J})^{k'} \leq 2^{2k'+k}\norm{\mathcal{L}}_{\mathrm{be}}^k\norm{J}^{k'}.
  \end{align}
\end{proof}

\scaledweight*
\begin{proof}
  Note that the Gaussian quadrature produces exact results when the integrand is a polynomial of degree up to $2q-1$. This implies that in \cref{eq: GQ}, $E_q[s^\ell]=0, ~ \forall  ~0 \leq \ell \leq 2q-1.$ Therefore, a direct  substitution yields \cref{eq:wsl}. \cref{eq:sum-w} is a special of \cref{eq:wsl}. In addition,
  \begin{align}
    \sum_{j_k=1}^q\cdots\sum_{j_{k-\ell}=1}^q\hat{w}_{(j_k)}\cdots\hat{w}_{(j_k,\ldots,j_{k-\ell})} &= \sum_{j_k=1}^q\cdots\sum_{j_{k-\ell}=1}^qw_{j_k} \cdot \frac{w_{j_{k-1}}\hat{s}_{j_k}}{t}\cdot \cdots \cdot \frac{w_{j_{k-\ell}}\hat{s}_{j_{k-\ell+1}}\cdots\hat{s}_{j_{k-1}}\hat{s}_{j_k}}{t^k} \\
                                                                                                    &=\frac{1}{t^k t^{k-1}\cdots t}\sum_{j_k=1}^q\cdots\sum_{j_{k-\ell}=1}^q w_{j_k}\hat{s}_{j_k}^{\ell} w_{j_{k-1}}\hat{s}_{j_{k-1}}^{\ell-1}\cdots w_{j_{k-\ell}} \\
                                                                                                    &=\frac{1}{t^k t^{k-1}\cdots t}\frac{t^{\ell+1}}{\ell+1}\frac{t^\ell}{\ell} \cdots \frac{t}{1}\\
                                                                                                    &= \frac{t^{\ell+1}}{(\ell+1)!},
  \end{align}
  where we have repeatedly used \cref{eq:wsl}. The above equation proves \cref{eq:ww}, which immediately implies \cref{eq:wwkto1}.
\end{proof}

\quadratureerror*
\begin{proof}
  We first discretize the last direction using the canonical quadrature points and weights as
  \begin{align}
    &\begin{aligned}
    &\left\|\int_{0 \leq s_1 \leq \cdots \leq s_k \leq t}\mathcal{F}_{k}(s_k, \ldots, s_1) \,\dd s_1\cdots \dd s_k  \right.\\
    &\left. -\quad \sum_{j_k=1}^{q}\int_{0 \leq s_1 \leq \cdots \leq s_{k-1} \leq \hat{x}_{(j_k)}} \mathcal{F}_{k}(\hat{x}_{(j_k)}, s_{k-1} \ldots, s_1)  \hat{w}_{(j_k)}\,\dd s_1\cdots \dd s_{k-1}\right\|_{\diamond} 
    \end{aligned} \\
    &\quad = O\left(\frac{t^{k-1}}{(k-1)!}\cdot\frac{2^{k+1}\norm{\mathcal{L}}_{\mathrm{be}}^k\norm{J}^{(2q)}t^{2q+1}q}{(2q)!}\right),
  \end{align}
  where  the error is in terms of the diamond norm. In the last step, we have used the fact that $\forall s>0$,
  \begin{align}
   \int_{0 \leq s_1 \leq \cdots \leq s_{k-1} \leq s } \dd s_1\cdots \dd s_{k-1} = \frac{s^{k-1}}{(k-1)!}.
  \end{align}
  
  For the next direction, we  treat each $\hat{x}_{(j_k)}$ as an upper bound and use a scaled quadrature points and weights as follows.
  \begin{align}
    &\begin{aligned}
        &\left\|\sum_{j_k=1}^{q}\int_{0 \leq s_1 \leq \cdots \leq s_{k-1} \leq \hat{x}_{(j_k)}} \mathcal{F}_{k}(\hat{x}_{(j_k)}, s_{k-1} \ldots, s_1)  \hat{w}_{(j_k)}\,\dd s_1\cdots \dd s_{k-1} - \right.\\
     &\left.\quad \sum_{j_{k-1}=1}^q\sum_{j_k=1}^{q}\int_{0 \leq s_1 \leq \cdots \leq s_{k-2} \leq \hat{x}_{(j_k, j_{k-1})}}\mathcal{F}_{k}(\hat{x}_{(j_k)}, \hat{x}_{(j_k,j_{k-1})}, s_{k-2}, \ldots, s_1)  \hat{w}_{(j_k)} \hat{w}_{(j_k,j_{k-1})}\,\dd s_1\cdots \dd s_{k-2}\right\|_{\diamond} 
  \end{aligned} \\
    &\quad = O\left(\left(\frac{t^{k-2}}{(k-2)!}\sum_{j_k=1}^q\hat{w}_{(j_k)}\right)\cdot\frac{2^{k+1}\norm{\mathcal{L}}_{\mathrm{be}}^k\norm{J}^{(2q)}t^{2q+1}q}{(2q)!}\right) \\
    &\quad = O\left(\frac{t^{k-1}}{(k-2)!}\cdot\frac{2^{k+1}\norm{\mathcal{L}}_{\mathrm{be}}^k\norm{J}^{(2q)}t^{2q+1}q}{(2q)!}\right), 
  \end{align}
  where the last equation follows from \cref{eq:sum-w}.  
  Keep applying the Gaussian quadrature rule for each direction, we have for all $1 \leq \ell \leq k-1$,
  \begin{scriptsize}
  \begin{align}
    &\begin{aligned}
    &\left\|\sum_{j_{k-\ell}=1}^q\cdots\sum_{j_k=1}^q\int_{0 \leq s_1 \leq \cdots \leq s_{k-\ell-1} \leq \hat{x}_{(j_k,\cdots,j_{k-\ell})}}\mathcal{F}_{k}(\hat{x}_{(j_k)}, \ldots, \hat{x}_{(j_k,\ldots,j_{k-\ell})}, s_{k-\ell-1}, \ldots, s_1)\right.\\
    &\quad\quad\quad\quad\quad\quad\quad \left. \hat{w}_{(j_k)}\cdots\hat{w}_{(j_k,\ldots,j_{k-\ell})}\,\dd s_1\cdots \dd s_{k-\ell-1}  \right. \\
    &\left.\quad - \sum_{j_{k-\ell-1}=1}^q\cdots\sum_{j_k=1}^{q}\int_{0 \leq s_1 \leq \cdots \leq s_{k-\ell-2} \leq \hat{x}_{(j_k,\ldots,j_{k-\ell-1})}} \mathcal{F}_{k}(\hat{x}_{(j_k)}, \ldots, \hat{x}_{(j_k,\ldots,j_{k-\ell-1})}, s_{k-\ell-2}, \ldots, s_1)\right. \\
    &\quad\quad\quad\quad\quad\quad\quad \left.\hat{w}_{(j_k)}\cdots\hat{w}_{(j_k,\ldots,j_{k-\ell-1})}\,\dd s_1\cdots \dd s_{k-\ell-2} \right\|_{\diamond}
    \end{aligned} \\
    &\quad = O\left(\left(\frac{t^{k-\ell-2}}{(k-\ell-2)!}\sum_{j_{k}=1}^q\hat{w}_{(j_{k})}\cdots\sum_{j_{k-\ell}=1}^q\hat{w}_{(j_k\ldots,j_{k-\ell})}\right)\cdot\frac{2^{k+1}\norm{\mathcal{L}}_{\mathrm{be}}^k\norm{J}^{(2q)}t^{2q+1}q}{(2q)!}\right) \\
    &\quad = O\left(\left(\frac{t^{k-\ell-2}}{(k-\ell-2)!}\cdot\frac{t^{\ell+1}}{(\ell+1)!}\right)\cdot\frac{2^{k+1}\norm{\mathcal{L}}_{\mathrm{be}}^k\norm{J}^{(2q)}t^{2q+1}q}{(2q)!}\right) \\
    &\quad = O\left(\left(\frac{t^{k-1}}{(k-\ell-2)!(\ell+1)!}\right)\cdot\frac{2^{k+1}\norm{\mathcal{L}}_{\mathrm{be}}^k\norm{J}^{(2q)}t^{2q+1}q}{(2q)!}\right),
  \end{align}
  \end{scriptsize}
  where the second equality follows from \cref{eq:ww}.  

  Putting everything together and using the triangle inequality, we have
    \begin{align}
    &\begin{aligned}
    &\left\|\int_{0 \leq s_1 \leq \cdots \leq s_k \leq t}\mathcal{F}_{k}(s_k, \ldots, s_1) \,\dd s_1\cdots \dd s_k -\right. \\
    &\left.\quad-\sum_{j_1=1}^{q} \cdots \sum_{j_k=1}^{q}\mathcal{F}_{k}\left(\hat{x}_{(j_k)}, , \ldots, \hat{x}_{(j_k, \ldots, j_1)}\right)\hat{w}_{(j_k)}\cdots \hat{w}_{(j_k, \ldots, j_1)}\right\|_{\diamond} 
    \end{aligned} \\
    &\quad = O\left(\left(\frac{t^{k-1}}{(k-1)!}+\sum_{\ell=0}^{k-2}\frac{t^{k-1}}{(k-\ell-2)!(\ell+1)!}\right)\frac{2^{k+1}\norm{\mathcal{L}}_{\mathrm{be}}^k\norm{J}^{(2q)}t^{2q+1}q}{(2q)!}\right) \\
    &\quad = O\left(\left(\sum_{\ell=-1}^{k-2}\frac{t^{k-1}}{(k-\ell-2)!(\ell+1)!}\right)\frac{2^{k+1}\norm{\mathcal{L}}_{\mathrm{be}}^k\norm{J}^{(2q)}t^{2q+1}q}{(2q)!}\right) \\
    &\quad = O\left(\frac{(2t)^{k-1}2^{k+1}\norm{\mathcal{L}}_{\mathrm{be}}^k\norm{J}^{(2q)}t^{2q+1}q}{(k-1)!(2q)!}\right).
  \end{align}
\end{proof}

\end{document}